\newtheorem{lemma}{Lemma}
\newtheorem{theorem}{Theorem}
\newtheorem{definition}{Definition}
\newtheorem{remark}{Remark}
\title{On the quasi-unconditional stability of BDF-ADI solvers for the
  compressible Navier-Stokes equations}
\author{Oscar P. Bruno and Max Cubillos}
\begin{document}

\date{}

\maketitle

\begin{abstract}
  The companion paper ``Higher-order in time quasi-unconditionally
  stable ADI solvers for the compressible Navier-Stokes equations in
  2D and 3D curvilinear domains'', which is referred to as Part~I in
  what follows, introduces ADI (Alternating Direction Implicit)
  solvers of higher orders of temporal accuracy (orders $s = 2$ to 6)
  for the compressible Navier-Stokes equations in two- and
  three-dimensional space. The proposed methodology employs the
  backward differentiation formulae (BDF) together with a
  quasilinear-like formulation, high-order extrapolation for nonlinear
  components, and the Douglas-Gunn splitting. A variety of numerical
  results presented in Part~I demonstrate in practice the theoretical
  convergence rates enjoyed by these algorithms, as well as their
  excellent accuracy and stability properties for a wide range of
  Reynolds numbers. In particular, the proposed schemes enjoy a
  certain property of ``quasi-unconditional stability'': for small
  enough (problem-dependent) fixed values of the time-step $\Delta t$,
  these algorithms are stable for arbitrarily fine spatial
  discretizations. The present contribution presents a mathematical
  basis for the performance of these algorithms. Short of providing
  stability theorems for the full BDF-ADI Navier-Stokes solvers, this
  paper puts forth proofs of unconditional stability and
  quasi-unconditional stability for BDF-ADI schemes as well as some
  related un-split BDF schemes, for a variety of related linear model
  problems in one, two and three spatial dimensions, and for schemes
  of orders $2\leq s\leq 6$ of temporal accuracy. Additionally, a set
  of numerical tests presented in this paper for the compressible
  Navier-Stokes equation indicate that quasi-unconditional stability
  carries over to the fully non-linear context.
\end{abstract}

\newpage

\section{Introduction} \label{sec:Introduction}

The companion paper~\cite{bruno_higher-order_2015}, which is referred
to as Part~I in what follows, introduces ADI (Alternating Direction
Implicit) solvers of higher orders of time-accuracy (orders $s = 2$ to
6) for the compressible Navier-Stokes equations in two- and
three-dimensional curvilinear domains. Implicit solvers, even of ADI
type, are generally more expensive per time-step, for a given spatial
discretization, than explicit solvers, but use of efficient implicit
solvers can be advantageous whenever the time-step restrictions
imposed by the mesh spacing $h$ are too severe. The proposed
methodology employs the BDF (backward differentiation formulae)
multi-step ODE solvers (which are known for their robust stability
properties) together with a quasilinear-like formulation and
high-order extrapolation for nonlinear components (which gives rise to
a linear problem that can be solved efficiently by means of standard
linear algebra solvers) and the Douglas-Gunn splitting (an ADI
strategy that greatly simplifies the treatment of boundary conditions
while retaining the order of time-accuracy of the solver).

As discussed in Part~I, the proposed BDF-ADI solvers are the first
ADI-based Navier-Stokes solvers for which high-order time-accuracy has
been demonstrated. In spite of the nominal second order of
time-accuracy inherent in the celebrated Beam and Warming
method~\cite{beam_implicit_1978}
(cf. also~\cite{beam_alternating_1980,warming_extension_1979}),
previous ADI solvers for the Navier-Stokes equations have not
demonstrated time-convergence of orders higher than one under general
non-periodic physical boundary conditions. Part~I demonstrates the
properties of the proposed schemes by means of a variety of numerical
experiments; the present paper, in turn, provides a theoretical basis
for the observed algorithmic stability traits. Restricting attention
to linear model problems related to the fully nonlinear Navier-Stokes
equations---including the advection-diffusion equation in one, two
and three spatial dimensions as well as general parabolic and
hyperbolic linear equations in two spatial dimensions---the present
contribution provides proofs of unconditional stability for BDF-ADI
schemes of second order as well as proofs of quasi-unconditional
stability (Definition~\ref{def:QuasiUnconditional} below) for certain
related BDF-based unsplit (non-ADI) schemes of orders $3\leq s\leq 6$.
Further, a variety of numerical tests presented in
Section~\ref{sec:BDFLinearNS} indicate that the property of
quasi-unconditional stability carries over to the BDF-ADI solvers for
the fully non-linear Navier-Stokes equations.

The BDF-ADI methodology mentioned above can be applied in conjunction
with a variety of spatial discretizations. For definiteness, in this
contribution attention is restricted to Chebyshev, Legendre and
Fourier spectral spatial approximations. The resulting one-dimensional
boundary value problems arising from these discretizations involve
full matrices which generally cannot be inverted efficiently by means
of a direct solver. However, by relying on fast transforms these
systems can be solved effectively on the basis of the GMRES iterative
solver; details in these regards are presented in
Part~I. Additionally, as detailed in that reference, in order to
ensure stability for the fully nonlinear Navier-Stokes equations a
mild spectral filter is used.

Perhaps the existence of Dahlquist's second barrier may explain the
widespread use of implicit methods of orders less than or equal to two
in the present context (such as backward Euler, the trapezoidal rule
and BDF2, all of which are A-stable), and the virtual absence of
implicit methods of orders higher than two---despite the widespread
use of the fourth order Runge-Kutta and Adams-Bashforth explicit
counterparts.  Clearly, A-stability is not necessary for all
problems---for example, any method whose stability region contains the
negative real axis (such as the BDF methods of orders two to six)
generally results in an unconditionally stable solver for the heat
equation. A number of important questions thus arise: Do the
compressible Navier-Stokes equations inherently require A-stability?
Are the stability constraints of all higher-order implicit methods too
stringent to be useful in the Navier-Stokes context?  How close to
unconditionally stable can a Navier-Stokes solver be whose temporal
order of accuracy is higher than two?

Clear answers to these questions are not available in the extant
literature; the present work seeks to provide a theoretical
understanding in these regards.  To illustrate the present state of
the art concerning such matters we mention the 2002
reference~\cite{bijl_implicit_2002}, which compares various implicit
methods for the Navier-Stokes equations, where we read: ``Practical
experience indicates that large-scale engineering computations are
seldom stable if run with BDF4.  The BDF3 scheme, with its smaller
regions of instability, is often stable but diverges for certain
problems and some spatial operators.  Thus, a reasonable practitioner
might use the BDF2 scheme exclusively for large-scale computations.''
It must be noted, however, that neither the
article~\cite{bijl_implicit_2002} nor the references it cites
investigate in detail the stability restrictions associated with the
BDF methods order $s\geq 2$, either theoretically or
experimentally. But higher-order methods can be useful: as demonstrated
in Part~I, methods of order higher than two give rise to very
significant advantages for certain classes of problems---especially
for large-scale computations for which the temporal dispersion
inherent in low-order approaches would make it necessary to use
inordinately small time-steps.

The recent 2015 article~\cite{forti_semiimplicit_2015}, in turn,
presents applications of the BDF scheme up to third order of time
accuracy in a finite element context for the incompressible
Navier-Stokes equations with turbulence modelling. This contribution
does not discuss stability restrictions for the third order solver,
and, in fact, it only presents numerical examples resulting from use
of BDF1 and BDF2.  The 2010 contribution~\cite{karaa_hybrid_2010},
which considers a three-dimensional advection-diffusion equation,
presents various ADI-type schemes, one of which is based on BDF3.  The
BDF3 stability analysis in that paper, however, is restricted to the
purely diffusive case.

This paper is organized as follows: Section~\ref{sec:BDF-ADI} presents
a brief derivation of the BDF-ADI method for the two-dimensional
pressure-free momentum equation. (A derivation for the full
Navier-Stokes equations is given in Part~I, but the specialized
derivation presented here may prove valuable in view of its relative
simplicity.) Section~\ref{sec:StabilityDiscussion} then briefly
reviews relevant notions from classical stability theory as well as
the concept of quasi-unconditional stability introduced in
Part~I. Section~\ref{sec:BDF2Stability} presents proofs of (classical)
unconditional stability for two-dimensional BDF-ADI schemes of order
$s=2$ specialized to the linear constant coefficient periodic
advection equation as well as the linear constant coefficient periodic
and non-periodic parabolic equations. Proofs of quasi-unconditional
stability for the (non-ADI) BDF methods of orders $s=2,\dots,6$
applied to the constant coefficient advection-diffusion equation in
one, two, and three spatial dimensions are presented in
Section~\ref{sec:HighOrderBDFStability}, along with comparisons of the
stability constraints arising from these BDF solvers and the
commonly-used explicit Adams-Bashforth solvers of orders three and
four. Section~\ref{sec:BDFLinearNS} provides numerical tests that
indicate that the BDF-ADI methods of orders $s=2$ to $6$ for the the
full three-dimensional Navier-Stokes equations enjoy the property of
quasi-unconditional stability; a wide variety of additional numerical
experiments are presented in Part~I. Section~\ref{sec:Conclusions},
finally, presents a few concluding remarks.

\section{The BDF-ADI scheme} \label{sec:BDF-ADI}

In this section we present a derivation of the BDF-ADI scheme in a
somewhat simplified context, restricting attention to the
two-dimensional pressure-free momentum equation
\begin{equation} \label{eq:PressureFreeMomentum}
  \mathbf{u}_{t}+\mathbf{u}\cdot\nabla\mathbf{u} =  \mu(\,\Delta
  \mathbf u + \frac 13 \nabla(\nabla\cdot\mathbf u)\,)
\end{equation}
in Cartesian coordinates for the velocity vector $\mathbf{u} =
(u,v)^\mathrm{T}$. The present derivation may thus be more readily
accessible than the one presented in Part~I for the full Navier-Stokes
equations under curvilinear coordinates.  Like the BDF-ADI
Navier-Stokes algorithms presented in Part~I, the schemes discussed in
this section incorporate three main elements, namely, 1)~A BDF-based
time discretization; 2) High-order extrapolation of relevant factors
in quasilinear terms (the full compressible Navier-Stokes solver
presented in Part~I utilizes a similar procedure for non-quasilinear
terms); and 3) The Douglas-Gunn ADI splitting.

The semi-discrete BDF scheme of order $s$ for
equation~\eqref{eq:PressureFreeMomentum} is given by
\begin{equation} \label{eq:MomentumSemiDiscrete}
  \mathbf u^{n+1} = \sum_{j=0}^{s-1} a_j \mathbf u^{n-j} + b\Delta
  t\left( -\mathbf{u}^{n+1} \cdot\nabla\mathbf{u}^{n+1} + \mu(\,\Delta
    \mathbf u^{n+1} + \frac 13 \nabla(\nabla\cdot\mathbf
    u^{n+1})\,)\right)
\end{equation}
where $a_j$ and $b$ are the order-$s$ BDF coefficients (see
e.g.~\cite[Ch. 3.12]{lambert_numerical_1991} or~Part~I); the
truncation error associated with this scheme is a quantity of order
$\mathcal O((\Delta t)^{s+1})$. This equation is quasi-linear: the
derivatives of the solution appear linearly in the equation. Of
course, the full compressible Navier-Stokes equations contain several
non-quasilinear nonlinear terms. As detailed in Part~I, by introducing
a certain ``quasilinear-like'' form of the equations, all such
nonlinear terms can be treated by an approach similar to the one
described in this section. In preparation for a forthcoming ADI
splitting we consider the somewhat more detailed form
\begin{equation} \label{eq:MomentumNonlinear_1}
\begin{split}
\mathbf u^{n+1}
=  \sum_{j=0}^{s-1} a_j \mathbf u^{n-j} 
+  b\Delta t \bigg( -u^{n+1} \partial_x
- v^{n+1} &\partial_y 
+ \mu \left( \begin{smallmatrix}
4/3 & 0 \\ 0 & 1
\end{smallmatrix} \right)   \partial_x^2
\\ & + \mu \left( \begin{smallmatrix}
1 & 0 \\ 0 & 4/3
\end{smallmatrix} \right) \partial_y^2
+ \mu \left( \begin{smallmatrix}
0 & 1/3 \\ 1/3 & 0
\end{smallmatrix} \right) \partial_x\partial_y \bigg) \mathbf u^{n+1}
\end{split}
\end{equation}
of equation~\eqref{eq:MomentumSemiDiscrete}, which we then rewrite as 
\begin{equation} \label{eq:MomentumNonlinear}
\begin{split}
  \Bigg( I + b\Delta t \bigg( u^{n+1} \partial_x +  
    v^{n+1} \partial_y + \mu \left( \begin{smallmatrix} 4/3 & 0 \\ 0 &
        1
\end{smallmatrix} \right) & \partial_x^2
+ \mu \left( \begin{smallmatrix}
1 & 0 \\ 0 & 4/3
\end{smallmatrix} \right) \partial_y^2 \bigg) \Bigg)
\mathbf u^{n+1} \\
&= \sum_{j=0}^{s-1} a_j \mathbf u^{n-j} 
+ b\Delta t \mu \left( \begin{smallmatrix}
0 & 1/3 \\ 1/3 & 0
\end{smallmatrix} \right) \partial_x\partial_y  \mathbf u^{n+1} .
\end{split}
\end{equation}

Upon spatial discretization, the solution of
equation~\eqref{eq:MomentumNonlinear} for the unknown velocity field
$\mathbf u^{n+1}$ amounts to inversion of a (generally large)
non-linear system of equations. In order to avoid inversion of such
nonlinear systems we rely on high-order extrapolation of certain
non-differentiated terms. This procedure eliminates the nonlinearities
present in the equation while preserving the order of temporal
accuracy of the algorithm. In detail, let $P_u$ (resp. $P_v$) denote
the polynomial of degree $s-1$ that passes through
$(t^{n-j+1},u^{n-j+1})$ (resp. through $(t^{n-j+1},v^{n-j+1})$) for
$1\leq j\leq s$, and define $\widetilde u_s^{n+1} = P_u(t^{n+1})$
(resp. $\widetilde v_s^{n+1} = P_v(t^{n+1}$)) and $\mathbf{\widetilde
  u}_s^{n+1} = (\widetilde u_s^{n+1},\widetilde v_s^{n+1}) $.  Then,
substituting $\widetilde u_s^{n+1}$ and $\widetilde v_s^{n+1}$
(resp. $\mathbf{\widetilde u}_s^{n+1}$) for the un-differentiated
terms $u^{n+1}$ and $v^{n+1}$ (resp. for the mixed derivative term) in
equation~\eqref{eq:MomentumNonlinear}, the alternative
variable-coefficient {\em linear} semi-discrete scheme
\begin{align} \label{eq:MomentumExtrapolated}
& \bigg[ I + b\Delta t \bigg( \widetilde u_s^{n+1} \partial_x
+ \widetilde v_s^{n+1} \partial_y
+ \mu \left( \begin{smallmatrix}
4/3 & 0 \\ 0 & 1
\end{smallmatrix} \right) \partial_x^2
+ \mu \left( \begin{smallmatrix}
1 & 0 \\ 0 & 4/3
\end{smallmatrix} \right) \partial_y^2 \bigg) \bigg]
\mathbf u^{n+1} \nonumber \\
&= \sum_{j=0}^{s-1} a_j \mathbf u^{n-j} 
+ b\Delta t \mu \left( \begin{smallmatrix}
0 & 1/3 \\ 1/3 & 0
\end{smallmatrix} \right) \partial_x\partial_y \mathbf{\widetilde u}_s^{n+1} 
\end{align}
results. Clearly the truncation errors inherent in the linear
scheme~\eqref{eq:MomentumExtrapolated} are of the same order as those associated
with the original nonlinear scheme~\eqref{eq:MomentumNonlinear}.

Even though equation~\eqref{eq:MomentumExtrapolated} is linear,
solution of (a spatially discretized version of) this equation
requires inversion of a generally exceedingly large linear system at
each time step. To avoid this difficulty we resort to a strategy of
ADI type~\cite{peaceman_numerical_1955} and, more explicitly, to the
Douglas-Gunn splitting~\cite{douglas_jr._general_1964}.  To derive the
Douglas-Gunn splitting we re-express
equation~\eqref{eq:MomentumExtrapolated} in the factored form
\begin{align} \label{eq:MomentumFactored}
& \bigg[ I + b\Delta t \bigg( \widetilde u_s^{n+1}  \partial_x 
+ \mu \left( \begin{smallmatrix}
4/3 & 0 \\ 0 & 1
\end{smallmatrix} \right)  \partial_x^2  \bigg) \bigg]
\bigg[ I + b\Delta t \bigg( \widetilde v_s^{n+1}   \partial_y 
+ \mu \left( \begin{smallmatrix}
1 & 0 \\ 0 & 4/3
\end{smallmatrix} \right)  \partial_y^2  \bigg) \bigg]
\mathbf u^{n+1} \nonumber \\ 
& \quad = \sum_{j=0}^{s-1} a_j \mathbf u^{n-j} 
+ b\Delta t \mu \left( \begin{smallmatrix}
0 & 1/3 \\ 1/3 & 0
\end{smallmatrix} \right) \partial_x\partial_y \mathbf{\widetilde u}_s^{n+1} \nonumber \\
& \quad \ + (b\Delta t)^2  \bigg( \widetilde u_s^{n+1}   \partial_x 
+ \mu \left( \begin{smallmatrix}
4/3 & 0 \\ 0 & 1
\end{smallmatrix} \right) \partial_x^2  \bigg)
\bigg( \widetilde v_s^{n+1} \partial_y 
+ \mu \left( \begin{smallmatrix}
1 & 0 \\ 0 & 4/3
\end{smallmatrix} \right) \partial_y^2  \bigg) \mathbf{\widetilde u}_{s-1}^{n+1}.
\end{align}
We specially mention the presence of terms on the right hand side this
equation which only depend on solution values at times
$t^n,\dots,t^{n-s+1}$, and which have been incorporated to obtain an
equation that is equivalent to~\eqref{eq:MomentumExtrapolated} up to
order $\mathcal O((\Delta t)^{s+1})$.

\begin{remark}
  It is important to note that, although $\mathbf{\widetilde
    u}_{s-1}^{n+1}$ provides an approximation of $\mathbf{u}^{n+1}$ of
  order $(\Delta t)^{s-1}$, the overall accuracy order inherent in the
  right-hand side of equation~\eqref{eq:MomentumFactored} is $(\Delta
  t)^{s+1}$, as needed---in view of the $(\Delta t)^2$ prefactor that
  occurs in the expression that contains $\mathbf{\widetilde
    u}_{s-1}^{n+1}$. While the approximation $\mathbf{\widetilde
    u}_{s}^{n+1}$ could have been used while preserving the accuracy
  order, we have found that use of the lower order extrapolation
  $\mathbf{\widetilde u}_{s-1}^{n+1}$ is necessary to ensure
  stability.
\end{remark}

Equation~\eqref{eq:MomentumFactored} can be expressed in the split form
\begin{subequations} \label{eq:DGForm1}
\begin{equation}
\begin{split}
\bigg[ I + b\Delta t \bigg( \widetilde u_s^{n+1} \partial_x 
+ \mu \left( \begin{smallmatrix}
4/3 & 0 \\ 0 & 1
\end{smallmatrix} \right)  \partial_x^2  \bigg) \bigg]  \mathbf u^* 
= & \sum_{j=0}^{s-1} a_j \mathbf u^{n-j}
+ b\Delta t \mu \left( \begin{smallmatrix}
0 & 1/3 \\ 1/3 & 0
\end{smallmatrix} \right) \partial_x\partial_y \mathbf{\widetilde u}_s^{n+1}  \\
 & - b\Delta t \bigg( \widetilde v_s^{n+1}  \partial_y 
+ \mu \left( \begin{smallmatrix}
1 & 0 \\ 0 & 4/3
\end{smallmatrix} \right) \partial_y^2  \bigg) \mathbf{\widetilde u}_{s-1}^{n+1} \label{eq:DGForm1-A}\\
\end{split}
\end{equation}
\begin{equation}
\begin{split}
\bigg[ I + b\Delta t \bigg( \widetilde v_s^{n+1}  \partial_y 
+ \mu \left( \begin{smallmatrix}
1 & 0 \\ 0 & 4/3
\end{smallmatrix} \right) \partial_y^2  \bigg) \bigg] \mathbf u^{n+1} 
= & \sum_{j=0}^{s-1} a_j \mathbf u^{n-j}
+ b\Delta t \mu \left( \begin{smallmatrix}
0 & 1/3 \\ 1/3 & 0
\end{smallmatrix} \right) \partial_x\partial_y \mathbf{\widetilde u}_s^{n+1}  \\
 & - b\Delta t \bigg( \widetilde u_s^{n+1}  \partial_x 
+ \mu \left( \begin{smallmatrix}
4/3 & 0 \\ 0 & 1
\end{smallmatrix} \right)  \partial_x^2  \bigg)
 \mathbf u^* \label{eq:DGForm1-B}
\end{split}
\end{equation}
\end{subequations}
that could be used to evolve the solution from time $t^n$ to time
$t^{n+1}$. We note that these split equations can also be expressed in
the form
\begin{subequations} \label{eq:DGForm2}
\begin{align}
\bigg[ I + b\Delta t \bigg( \widetilde u_s^{n+1} \partial_x 
+ \mu \left( \begin{smallmatrix}
4/3 & 0 \\ 0 & 1
\end{smallmatrix} \right)  \partial_x^2  \bigg) \bigg]  \mathbf u^* 
= & \sum_{j=0}^{s-1} a_j \mathbf u^{n-j} 
+ b\Delta t \mu \left( \begin{smallmatrix}
0 & 1/3 \\ 1/3 & 0
\end{smallmatrix} \right) \partial_x\partial_y \mathbf{\widetilde u}_s^{n+1} \nonumber \\
 & - b\Delta t \bigg( \widetilde v_s^{n+1}  \partial_y 
+ \mu \left( \begin{smallmatrix}
1 & 0 \\ 0 & 4/3
\end{smallmatrix} \right) \partial_y^2  \bigg) \mathbf{\widetilde u}_{s-1}^{n+1}\label{eq:DGForm2a}  \\
\bigg[ I + b\Delta t \bigg( \widetilde v_s^{n+1}  \partial_y 
+ \mu \left( \begin{smallmatrix}
1 & 0 \\ 0 & 4/3
\end{smallmatrix} \right) \partial_y^2  \bigg) \bigg]
\mathbf u^{n+1} 
= & \mathbf u^* + b\Delta t \bigg( \widetilde v_s^{n+1}  \partial_y 
+ \mu \left( \begin{smallmatrix}
1 & 0 \\ 0 & 4/3
\end{smallmatrix} \right) \partial_y^2  \bigg) \mathbf{\widetilde u}_{s-1}^{n+1},
\label{eq:DGForm2b}
\end{align}
\end{subequations}
which is equivalent to~\eqref{eq:DGForm1}---as it can be checked by
subtracting equation~\eqref{eq:DGForm1-A}
from~\eqref{eq:DGForm1-B}. The splitting~\eqref{eq:DGForm2} does not
contain the term involving a differential operator applied to
$\mathbf{u}^* $ on the right-hand side of~\eqref{eq:DGForm1-B} and it
contains, instead, two instances of a term involving a differential
operator applied to $\mathbf{\widetilde u}_{s-1}^{n+1}$. This term
needs to be computed only once for each full time-step and
therefore~\eqref{eq:DGForm2b} leads to a somewhat less expensive
algorithm than~\eqref{eq:DGForm1}.

\section{Unconditional and quasi-unconditional
  stability\label{sec:StabilityDiscussion}}

This section reviews relevant ideas concerning stability in ODE and
PDE theory, and it introduces the new notion of quasi-unconditional
stability.

A variety of stability concepts have been considered in the
literature. Some authors
(e.g.~\cite{gustafsson_time-dependent_2013,leveque_finite_2007})
define (conditional) stability as a requirement of uniform boundedness
of the solution operators arising from spatial and temporal
discretization of the PDE for small enough mesh sizes $h$ and with
$\Delta t\sim h^p$, where $p$ is the order of the underlying spatial
differential operator.  In this paper we will instead define stability
as the boundedness of solutions in terms of their initial data, see
e.g.~\cite{strikwerda_finite_2004}: using a norm $|\cdot|$ which
quantifies the size of the solution at some fixed point $t$ in time,
we say that a scheme is \emph{stable} within some region $\Lambda
\subset \{(h,\Delta t) : h>0,\ \Delta t>0\}$ of
discretization parameter-space if and only if for any final time
$T\in\mathbb{R}$ and all $(h,\Delta t)\in\Lambda$ the estimate
\begin{equation} \label{eq:Stable}
  |Q^n| \leq C_T \sum_{j=0}^{J} |Q^j| 
\end{equation}
holds for all non-negative integers $n$ for which $n\Delta t\leq T$,
where $J$ is a method-dependent non-negative integer, and where $C_T$
is a constant which depends only on $T$.  A method is
\emph{unconditionally stable} for a given PDE problem if $\Lambda$
contains all pairs of positive parameters $(h,\Delta t)$.

The \emph{region of absolute stability} $R$ of an ODE scheme, in turn,
is the set of complex numbers $z = \lambda\,\Delta t$ for which the
numerical solution of the ODE $y'(t) = \lambda\,y(t)$ is stable for
the time-step $\Delta t$. A numerical method which is stable for all
$\Delta t > 0$ and for all $\lambda$ with negative real part is said
to be \emph{A-stable}. In fact, the first- and second-order BDF ODE
solvers are A-stable, and thus may lead to unconditionally stable
methods for certain types of linear PDEs. As is well known, however,
implicit linear multi-step methods of order greater than two, and, in
particular, the BDF schemes of order $s\geq 3$, are not A-stable
(Dahlquist's second barrier~\cite{dahlquist_special_1963}).
Nevertheless, we will see that PDE solvers based on such higher-order
BDF methods may enjoy the property of \emph{quasi-unconditional
  stability}---a concept that we define in what follows.

\begin{definition}\label{def:QuasiUnconditional}
  Let $\Omega_h$ be a family of spatial discretizations of a domain
  $\Omega$ controlled by a mesh-size parameter $h$ and let $\Delta t$
  be a temporal step size. A numerical method for the solution of the
  PDE $Q_t = \mathcal{P}\,Q$ in $\Omega$ is said to be
  \textbf{\emph{quasi-unconditionally stable}} if there exist positive
  constants $M_h$ and $M_t$ such that the method is stable for all $h
  < M_h$ and all $\Delta t < M_t$.
\end{definition}

Clearly, quasi-unconditional stability implies that for small enough
$\Delta t$, the method is stable for arbitrarily fine spatial
discretizations. Note that  stability may still take
place outside of the  quasi-unconditional
stability rectangle $(0,M_h)\times(0,M_t)$ provided additional stability constraints are satisfied.  For
example, Figure~\ref{fig:QuasiUnconStab} presents a schematic of the
stability region for a notional method that enjoys quasi-unconditional
stability in the parameter space $(h,\Delta t)$ as well as conditional
CFL-like stability outside the quasi-unconditional stability
rectangle. In practice we have encountered quasi-unconditionally
stable methods whose stability outside the window
$(0,M_h)\times(0,M_t)$ is delimited by an approximately straight curve
similar to that displayed in Figure~\ref{fig:QuasiUnconStab}.

\begin{figure}[!htb]
	\centering
	\includegraphics[width=0.5\textwidth]{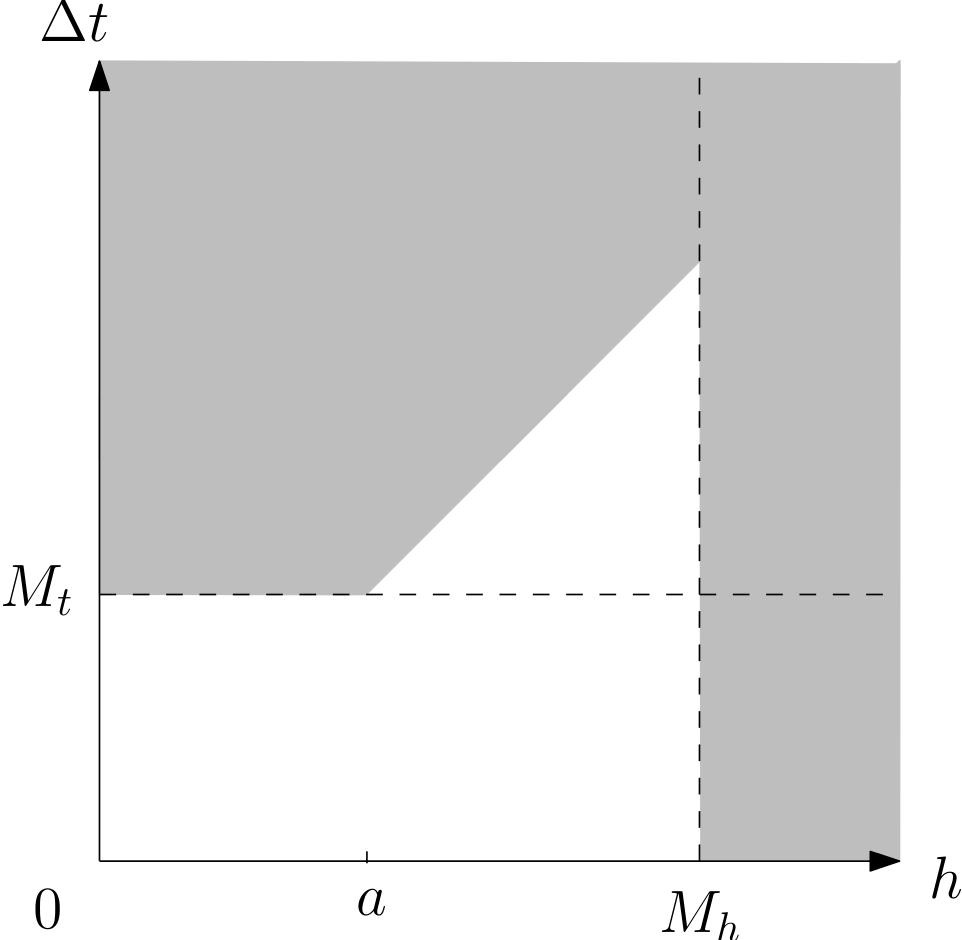}
        \caption{Stability region of a notional quasi-unconditionally
          stable method. The white region is the set of $(h,\Delta t)$
          pairs for which the method is stable.  Notice that outside
          of the quasi-unconditional stability rectangle
          $(0,M_h)\times(0,M_t)$, the present hypothetical method is
          stable for time steps satisfying the conditions $\Delta t <
          \min\{h,M_h\}$; other types of CFL-like conditions do, of
          course, occur commonly in practice. Thus, in particular,
          quasi-unconditional stability does not exclude the
          possibility of stability outside the rectangle
          $(0,M_h)\times(0,M_t)$.}
	\label{fig:QuasiUnconStab}
\end{figure}

In lieu of a full stability analysis for the main problem under
consideration (the fully non-linear compressible Navier-Stokes
equations, for which stability analyses are not available for any of
the various extant algorithms), in support of the stability behavior
observed in our numerical experiments we present rigorous stability
results for simpler related problems. In particular,
Section~\ref{sec:BDF2Stability} establishes the unconditional
stability of the Fourier-based BDF2-ADI scheme for linear constant
coefficient hyperbolic and parabolic equations in two spatial
dimensions. Section~\ref{sec:HighOrderBDFStability}, in turn, shows
that quasi-unconditional stability takes place for Fourier-spectral
BDF methods of order $s$ ($2 \leq s \leq 6$, without ADI) for the
advection-diffusion equation in one- and two-dimensional space, and
Section~\ref{sec:BDFLinearNS} presents numerical tests that
demonstrate quasi-unconditional stability for the full compressible
Navier-Stokes equations.

\begin{remark} \label{rem:vonNeumann} In general, the stability of a
  PDE solver can be ensured provided relevant discrete operators are
  power bounded~\cite{trefethen_pseudospectra_1997}. The von Neumann
  criterion provides a necessary but not sufficient condition for the
  power-boundedness of solution operators. If the discrete operators
  are non-normal, then stability analysis requires application of the
  Kreiss matrix theorem~\cite[p. 177]{gustafsson_time-dependent_2013}.
  In particular, it is
  known~\cite{trefethen_pseudospectra_1997,trefethen_spectra_2005}
  that certain discretizations and numerical boundary conditions can
  give rise to non-normal families of solution operators that are not
  power-bounded (and unstable) even though the underlying problem is
  linear with constant coefficients and all eigenvalues are inside the
  unit disk. (An operator $P$ with adjoint $P^*$ is normal if $P^*P =
  PP^*$.) In our Fourier-spectral context, however, all operators are
  normal (which follows from the fact that the first derivative
  operators are skew-Hermitian, the second derivative operators are
  Hermitian, and all derivative operators commute) and consequently
  the von Neumann criterion is both necessary and sufficient
  (cf.~\cite[p.  189]{hesthaven_spectral_2007}).
\end{remark}

\section{Unconditional stability for BDF2-ADI: periodic linear case
   \label{sec:BDF2Stability}}

 In this section the energy method is used to establish the
 unconditional stability of the BDF2-ADI scheme for the constant
 coefficient hyperbolic and parabolic equations with periodic boundary
 conditions under a Fourier collocation spatial approximation
 (Sections~\ref{sec:Hyperbolic} and~\ref{sec:Parabolic}). In the case
 of the parabolic equation, further, an essentially identical argument
 is used in Section~\ref{sec:LegendreParabolic} to establish the
 corresponding unconditional stability result for the Legendre
 polynomial spectral collocation method with (non-periodic)
 homogeneous boundary conditions. Unfortunately, as discussed in that
 section, such a direct extension to the non-periodic case has not
 been obtained for the hyperbolic equation.

\subsection{Preliminary definitions\label{sec:PrelimDefs}} 

We consider the domain
\begin{equation}\label{eq:PeriodicDomain} 
  \Omega = [0,2\pi) \times [0,2\pi)
\end{equation}
which we discretize on the basis of an odd number $N+1$ of
discretization points ($N$ even, for definiteness) in both the $x$ and
$y$ directions ($x_{j} = 2\pi j/(N+1)$ and $y_{k} = 2\pi k/(N+1)$,
$0\leq j,k \leq N$), which gives rise to the grid
\begin{equation}\label{eq:Grid}
  \{(x_j,y_k):0\leq j,k \leq N\}.
\end{equation}
(The restriction to even values of $N$, which is introduced for
notational simplicity, allows us to avoid changes in the form of the
summation limits in the Fourier
series~\eqref{eq:FourierInterpolant}. Similarly, our use of equal
numbers of points in the $x$ and $y$ directions simplifies the
presentation somewhat. But, clearly, extensions of our constructions
that allow for odd values of $N$ as well as unequal numbers of points
in the $x$ and $y$ direction are straightforward.)

For (complex valued) grid functions
\begin{equation} \label{eq:GridFunctions}
f = \{f_{jk}\} \text{ and } g = \{g_{jk}\},\quad 0\leq j,k \leq N
\end{equation}
we define the discrete inner product and norm
\begin{equation} \label{eq:InnerProduct}
(f,g) = \frac{1}{(N+1)^2}\sum\limits_{j,k} f_{jk} \bar g_{jk} 
\end{equation}
$$ |f| = \sqrt{(f,f)}. $$
Each grid function $f$ as in~\eqref{eq:GridFunctions} is can be
associated to a trigonometric interpolant $f_N(x,y)$ ($f_N(x_j,y_k) =
f_{jk}$) which is given by
\begin{equation} \label{eq:FourierInterpolant}
  f_N(x,y) = \sum_{|j|,|k| \leq \frac{N}{2}} \widehat{f}_{jk} e^{i(jx+ky)}
\end{equation}
where 
$$ \widehat{f}_{jk} = \frac{1}{(N+1)^2} \sum\limits_{j,k} f_{jk} e^{-i(jx_{j}+ky_{k})}. $$
Note that the inner product~\eqref{eq:InnerProduct} coincides with the
trapezoidal quadrature rule applied to the grid functions $f$ and $g$
over the underlying domain $[0,2\pi)\times[0,2\pi)$. Since the
trapezoidal rule is exact for all truncated
Fourier series containing exponentials of the form $e^{-i(jx+ky)}$
with $-N\leq j,k\leq N$, it follows that the discrete inner
product~\eqref{eq:InnerProduct} equals the integral inner product of
the corresponding trigonometric interpolants---i.e.
\begin{equation} \label{eq:InnerProductToIntegral} 
  (f,g) = \frac{1}{(2\pi)^2} \int_0^{2\pi} \int_0^{2\pi} f_N(x,y) \bar g_N(x,y) \,dx\,dy.
\end{equation}

In order to discretize solutions of PDEs we utilize {\em time
  sequences of grid functions} $u = \{u^n: n \geq 0\}$, where, for
each $n$, $u^n = \{u^n_{jk}\}$ is a grid function such as those
displayed in equation~\eqref{eq:GridFunctions}. For such time series
the scalar product~\eqref{eq:InnerProduct} at fixed $n$ can be used to
produce a time series of scalar products: the inner product of two
time series of grid functions $u=\{u^n: n\geq 0 \}$ and $v=\{v^n:
n\geq 0 \}$ is thus a time series of complex numbers:
$$
  (u,v) = \{ (u^n,v^n): n\geq 0 \}.
$$

\subsection{Discrete spatial and temporal operators\label{sec:DiscreteOperators}} 

In order to discretize PDEs we use discrete spatial and temporal
differentiation operators that act on grid functions and time-series,
respectively.

We consider spatial differentiation first: the Fourier $x$-derivative
operator $\delta_x$ applied to a grid function $f$, for example, is
defined as the grid function $\delta_x f$ whose $jk$ value equals the
value of the derivative of the interpolant $f_N$ at the point
$(x_{j},y_k)$:
\begin{equation} \label{eq:DerivativeExact} 
  (\delta_x f)_{jk} = \frac{\partial}{\partial x} f_N(x_{j},y_{k}). 
\end{equation}
The operators $\delta_{xx}$, $\delta_y$, $\delta_{yy}$, $\delta_{xy}=
\delta_{x}\delta_{y} = \delta_{y}\delta_{x}$ etc. are defined
similarly.

Using the exactness relation~\eqref{eq:InnerProductToIntegral} and integration
by parts together with the periodicity of the domain, it follows that the
first derivative operators $\delta_x$ and $\delta_y$ are skew-Hermitian and the
second derivative operators $\delta_{xx}$, $\delta_{yy}$ are Hermitian:
%
\begin{equation} \label{eq:DeltaXAntisym} 
  (\delta_x f,g) = -(f,\delta_x g),\;\;(\delta_y f,g) = -(f,\delta_y g),\;\;(\delta_{xx} f,g) = (f,\delta_{xx} g),\;\;(\delta_{yy} f,g) = (f,\delta_{yy} g).
\end{equation}
Certain temporal differentiation and extrapolation operators we use,
in turn, produce a new time series for a given time series---for both
numerical time series as well as time series of grid functions. These
operators include the regular first and second order finite difference
operators $D$ and $D^2$, the three-point backward difference operator
$\widehat{D}$ that is inherent in the BDF2 algorithm, as well as the
second order accurate extrapolation operator ``$\sim$'':
\begin{alignat}{3}
  (Du)^n &= u^n - u^{n-1} \quad && n \geq 1 \\
  (D^2 u)^n &= (Du)^n - (Du)^{n-1} =  u^n - 2u^{n-1} + u^{n-2} \quad && n \geq 2 \\
  (\widehat{D} u)^n &= \frac{3}{2} u^n - 2 u^{n-1} + \frac{1}{2} u^{n-2}  \quad && n \geq 2 \\
  \widetilde{u}^{n+1} &= 2u^n - u^{n-1}  \quad && n \geq 1 .
\end{alignat}
Note that the members of the time series $\widehat{D} u$ can also be
expressed in the forms
\begin{align}
	(\widehat{D} u)^n &= D \left( u^n + \frac{1}{2}(Du)^n \right) \label{eq:DuHatFormDDu} \\
			    &= \frac{1}{2} ( (Du)^n + (D\widetilde{u})^{n+1} ) \label{eq:DuHatFormTilde} \\
			    &= \frac{3}{2} (Du)^n - \frac{1}{2} (Du)^{n-1}.
\end{align}

In what follows we make frequent use of the finite difference product
rule for two time series $u$ and $v$:
\begin{equation} \label{eq:FDProductRule}
  u\,Dv = D(uv) - v\,Du + (Du)\,(Dv).
\end{equation}
An immediate consequence of~\eqref{eq:FDProductRule}, which will also
prove useful, concerns the real part of scalar products of the form
$(Du,P\,u)$ where $P$ is an operator which is self-adjoint with
respect to the discrete inner product~\eqref{eq:InnerProduct} and
which commutes with $D$. For such operators we have the identity
\begin{equation}\label{eq:ProductRule}
\Re (Du,P\,u) = \frac{1}{2} D(u,P\,u) + \frac{1}{2} (Du,P\,Du)
\end{equation}
which follows easily from the relations
\begin{align}
	(Du,P\,u) &= D(u,P\,u) - (u,DP\,u) + (Du,DP\,u) \nonumber \\
	          &= D(u,P\,u) - (P\,u,Du) + (Du,P\,Du) \nonumber \\
	          &= D(u,P\,u) - \overline{(Du,P\,u)} + (Du,P\,Du). \nonumber 
\end{align}
%

\subsection{Periodic BDF2-ADI stability: hyperbolic equation\label{sec:Hyperbolic}}

This section establishes the unconditional stability of the BDF2-ADI
method for the constant-coefficient advection equation
\begin{equation}\label{eq:ConvectionEq} 
U_t + \alpha U_x + \beta U_y = 0
\end{equation}
in the domain~\eqref{eq:PeriodicDomain}, with real constants $\alpha$
and $\beta$, and subject to periodic boundary conditions. The BDF2-ADI
scheme for the advection equation can be obtained easily by adapting
the corresponding form~\eqref{eq:MomentumFactored} of the BDF2-ADI
scheme for the pressure-free momentum equation. Indeed, using the
Fourier collocation approximation described in the previous two
sections, letting $u$ denote the discrete approximation of the
solution $U$, and noting that, in the present context the necessary
extrapolated term $\mathbf{\widetilde u}_{s-1}^{n+1}$ in
equation~\eqref{eq:MomentumFactored} equals $u^n$, the factored form
of our BDF2-ADI algorithm for equation~\eqref{eq:ConvectionEq} is
given by
\begin{equation}\label{eq:BDF2Convection}
  (I+b\Delta t \alpha \delta_x)(I+b\Delta t \beta \delta_y)u^{n+1} = a_0 u^n +a_1 u^{n-1} + \alpha\beta(b\Delta t)^2\delta_x\delta_yu^n.
\end{equation}

Before proceeding to our stability result we derive a more convenient
(equivalent) form for equation~\eqref{eq:BDF2Convection}: using the
numerical values $a_0=4/3$, $a_1=-1/3$, and $b=2/3$ of the BDF2
coefficients (see e.g.~\cite[Ch. 3.12]{lambert_numerical_1991}
or~Part~I), the manipulations
\begin{align*}
  0 &= (I+b\Delta t \alpha \delta_x)(I+b\Delta t \beta \delta_y)u^{n+1} - a_0 u^n - a_1 u^{n-1} - \alpha\beta(b\Delta t)^2\delta_x\delta_yu^n \\
    &= u^{n+1} - a_0 u^n - a_1 u^{n-1} + b\Delta t \alpha \delta_x u^{n+1} + b\Delta t \beta \delta_y u^{n+1} + \alpha\beta(b\Delta t)^2\delta_x\delta_y(u^{n+1} - u^n) \\
    &= \frac{1}{b}(u^{n+1} - a_0 u^n - a_1 u^{n-1}) + \Delta t \alpha \delta_x u^{n+1} + \Delta t \beta \delta_y u^{n+1} + b\alpha\beta(\Delta t)^2\delta_x\delta_y(u^{n+1} - u^n)
\end{align*}
reduce equation~\eqref{eq:BDF2Convection} to the form
\begin{equation} \label{eq:DiscreteHyperbolicAB}
	\widehat{D} u + A\, u + B\, u + bAB\, Du = 0,
\end{equation}
where $b= 2/3$, $A=\alpha\Delta t \delta_x $ and $B=\beta\Delta t
\delta_y$.

We are now ready to establish an energy stability estimate for the
BDF2-ADI equation~\eqref{eq:BDF2Convection}.
\begin{theorem} \label{thm:HyperbolicStability} 
The solution $u$ of~\eqref{eq:BDF2Convection} with initial conditions $u^0$ and
$u^1$ satisfies
$$ |u^n|^2 + |\widetilde{u}^{n+1}|^2 +  \frac{2}{3} \left( |A u^n|^2 + |B u^n|^2 +\sum_{m=2}^n |(D^2 u)^m|^2 \right) \leq M $$
for all $n \geq 2$, where
$$ M = |u^1|^2 + |\widetilde{u}^{2}|^2 + \frac 23 ( |A u^1|^2 + |B u^1|^2 ). $$
In particular, the scheme is unconditionally stable in the sense that
a bound of the form given in equation~\eqref{eq:Stable} is satisfied.
\end{theorem}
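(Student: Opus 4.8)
The plan is to establish the stated bound by the discrete energy method, working from the reduced form \eqref{eq:DiscreteHyperbolicAB}, namely $\widehat{D}u + Au + Bu + bAB\,Du = 0$. First I would take the discrete inner product \eqref{eq:InnerProduct} of \eqref{eq:DiscreteHyperbolicAB} with $u^n$ and extract real parts. Since $\delta_x,\delta_y$ are skew-Hermitian, so are $A$ and $B$, and hence the advection contributions vanish identically, $\Re(Au^n,u^n)=\Re(Bu^n,u^n)=0$, by \eqref{eq:DeltaXAntisym}. For the $\widehat{D}u$ term I would invoke the BDF2 ``G-stability'' identity
\begin{equation*}
\Re(\widehat{D}u^n,u^n)=\tfrac14\,D\!\left(|u^n|^2+|\widetilde u^{n+1}|^2\right)+\tfrac14\,|(D^2u)^n|^2,
\end{equation*}
which follows from the representations \eqref{eq:DuHatFormDDu}--\eqref{eq:DuHatFormTilde} of $\widehat{D}$ together with the product rule \eqref{eq:ProductRule} applied with $P=I$. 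This already produces the $|u^n|^2+|\widetilde u^{n+1}|^2$ portion of the energy and the second-difference dissipation $|(D^2u)^n|^2$.

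The crux is the splitting cross term $b\,\Re(AB\,Du^n,u^n)$. Here I would use that $AB=\alpha\beta(\Delta t)^2\delta_x\delta_y$ is \emph{self-adjoint} (a product of two commuting skew-Hermitian operators) and commutes with $D$, so the product rule \eqref{eq:ProductRule} applies with $P=AB$ and, combined with the identity above, yields the exact relation
\begin{equation*}
D\!\left[\tfrac14\!\left(|u^n|^2+|\widetilde u^{n+1}|^2\right)+\tfrac{b}{2}\,(u^n,AB\,u^n)\right]+\tfrac14\,|(D^2u)^n|^2=\tfrac{b}{2}\,(A\,Du^n,B\,Du^n).
\end{equation*}
The positive-definite energy of the theorem then emerges by completing a square: using the real scalar identity $(u^n,AB\,u^n)=-(Au^n,Bu^n)$ and the specific BDF2 value $b=2/3$ (so that $2b=\tfrac43$), the cross terms $(Au^n,Bu^n)$ cancel exactly in the combination $4\,[\,\cdots\,]+\tfrac23|(A+B)u^n|^2$, leaving precisely $|u^n|^2+|\widetilde u^{n+1}|^2+\tfrac23(|Au^n|^2+|Bu^n|^2)=E^n$. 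Thus the quantity being telescoped and the quadratic form in the statement are one and the same, up to the manifestly non-negative correction $\tfrac23|(A+B)u^n|^2$; it is here that the precise coefficient $b=2/3$ is essential.

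It remains to control the right-hand residual $\tfrac{b}{2}(A\,Du^n,B\,Du^n)$, together with the increment $D|(A+B)u^n|^2$ generated by the added square, in terms of the available dissipation. The estimate I would use is the operator Cauchy--Schwarz/AM--GM bound $|(A\,Du,B\,Du)|\le\tfrac12\bigl(|A\,Du|^2+|B\,Du|^2\bigr)$, which reflects the elementary inequality $2\xi\eta\le\xi^2+\eta^2$ between the (purely imaginary) Fourier symbols $i\xi,i\eta$ of $A,B$; this is where the factor $\tfrac23$ multiplying both $|Au^n|^2+|Bu^n|^2$ and $\sum|(D^2u)^m|^2$ enters. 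I expect this absorption step to be the main obstacle: the residual is indefinite and carries an advection-speed dependence, and the bookkeeping that shows the remainder is non-positive after summation from $m=2$ to $n$ is delicate (a naive term-by-term inequality does not hold). Once the summed inequality $E^n+\tfrac23\sum_{m=2}^n|(D^2u)^m|^2\le E^1=M$ is in hand, the stability estimate \eqref{eq:Stable} follows at once from $|u^n|^2\le E^n$ and the fact that $M$ depends only on $u^0,u^1$. As a consistency check one may note, following Remark~\ref{rem:vonNeumann}, that all operators here are normal and commute, so the scheme decouples into independent Fourier modes; the energy argument is nonetheless preferred as the route one expects to extend to the non-periodic and variable-coefficient settings in which von Neumann analysis is unavailable.
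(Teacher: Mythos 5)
Your handling of the first three terms coincides with the paper's: the inner product with $u$, the vanishing of $\Re(u,Au)$ and $\Re(u,Bu)$ by skew-Hermitian-ness, and the BDF2 identity $\Re(u,\widehat{D}u)=\tfrac14 D\bigl(|u|^2+|\widetilde{w}|^2\bigr)+\tfrac14|D^2u|^2$ are all exactly as in the paper, and your complete-the-square computation recovering $E^n=|u^n|^2+|\widetilde{u}^{n+1}|^2+\tfrac23\bigl(|Au^n|^2+|Bu^n|^2\bigr)$ from $4\bigl[\tfrac14(|u|^2+|\widetilde{w}|^2)+\tfrac13(u,AB\,u)\bigr]+\tfrac23|(A+B)u|^2$ is algebraically correct. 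But the step you yourself flag as ``the main obstacle'' is a genuine gap, not delicate bookkeeping, and the tool you propose for it cannot close it. After your product-rule identity the residual to be absorbed is $\tfrac{b}{2}(A\,Du,B\,Du)$ together with the increment $\tfrac16 D|(A+B)u|^2$; your AM--GM bound $|(A\,Du,B\,Du)|\le\tfrac12\bigl(|A\,Du|^2+|B\,Du|^2\bigr)$ leaves terms that would have to be absorbed by the only available dissipation, $\tfrac14|D^2u|^2$. That is impossible uniformly in $N$: with $A=\alpha\,\Delta t\,\delta_x$ and $B=\beta\,\Delta t\,\delta_y$ the operator norms grow like $\alpha\,\Delta t\,N/2$, so $|A\,Du|^2$ is not controlled by any fixed multiple of $|D^2u|^2$ for fixed $\Delta t$ and arbitrarily fine grids---which is precisely the regime an unconditional stability theorem must cover. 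Summation over $m$ does not rescue the sign either; no mechanism in your argument produces the needed cancellation, and as a pure operator inequality the required bound is false.

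The missing idea---and the heart of the paper's proof (Lemma~\ref{lem:uABDuBound})---is to use the scheme equation a \emph{second} time: take the inner product of \eqref{eq:DiscreteHyperbolicAB} with $A\,Du$ (and symmetrically with $B\,Du$). Skew-Hermitian-ness annihilates $\Re(Du,A\,Du)$ and $\Re\,b(AB\,Du,A\,Du)$; the term $(Bu,A\,Du)=-(u,AB\,Du)$ reproduces exactly the problematic cross term; and $(Au,A\,Du)$ becomes the telescoping energy $\tfrac12 D|Au|^2+\tfrac12|A\,Du|^2$, whose positive part $\tfrac12|A\,Du|^2$ is precisely what absorbs the Young-inequality loss from $\tfrac12(D^2u,A\,Du)$. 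Averaging the two pairings gives, \emph{for solutions of the scheme} (not as an operator inequality),
\begin{equation*}
\Re\,(u,AB\,Du)\;\ge\;\tfrac14\,D\bigl(|Au|^2+|Bu|^2\bigr)\;-\;\tfrac18\,|D^2u|^2,
\end{equation*}
after which everything telescopes with non-negative coefficients: the $\tfrac18$ loss, multiplied by $b=\tfrac23$, costs only $\tfrac1{12}$ of the $\tfrac14|D^2u|^2$ produced by the BDF2 term, leaving the $\tfrac16|D^2u|^2$ dissipation and yielding the stated bound on summing from $m=2$ to $n$. So your energy functional and target inequality are the right ones (your identity and the paper's lemma are two decompositions of the same quantity), but without the equation-tested-against-$A\,Du$ and $B\,Du$ mechanism the absorption step fails, and the proof as proposed does not go through.
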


\begin{proof}
Taking the inner product of equation~\eqref{eq:DiscreteHyperbolicAB}
with $u$ we obtain
\begin{alignat}{7} \label{eq:uProductHyperbolic}
  0 &= (u,\widehat Du) && + (u,A\,u) && + (u,B\,u) && + b(u,AB\,Du) \\
  &= (\mathit{I}) && + (\mathit{II}) && + (\mathit{III}) && + (\mathit{IV}), \nonumber
\end{alignat}
where $(I) =(u,\widehat Du)$, $(\mathit{II})=(u,A\,u)$, etc. Our goal is
to express the real part of the right-hand side
in~\eqref{eq:uProductHyperbolic} as a sum of non-negative terms and
telescoping terms of the form $Df$ for some non-negative numerical time
series $f$. To that end, we consider the terms $(I)$ through
$(\mathit{IV})$ in turn.

\paragraph{$(\mathit{I})$:} Using the
expression~\eqref{eq:DuHatFormTilde} for $\widehat Du$ we obtain
\begin{equation} \label{eq:uDHatU}
  (I) = \frac{1}{2} (u,Du) + \frac{1}{2} (u,D\widetilde w),
\end{equation}
where $\widetilde w$ denotes the time series obtained by shifting $\widetilde u$
forwards by one time step:
\begin{equation} \label{eq:WDefinition}
  \widetilde w = \{ \widetilde w^n = \widetilde u^{n+1} : n\geq 1 \}.
\end{equation}
To re-express~\eqref{eq:uDHatU} we first note that for any two
grid functions $a$ and $b$ we have the relation
\begin{align*}
           & |a-b|^2 = |a|^2 + |b|^2 - 2\Re (a,b) \\
  \implies & \Re (a,b) = \frac{1}{2}(|a|^2 + |b|^2 - |a-b|^2).
\end{align*}
Therefore, for any time series $g$ we have
\begin{align}
  \Re (u,Dg)^n &= \Re (u^n,g^n) - \Re (u^n,g^{n-1}) \nonumber \\
         &= \frac{1}{2} ( |u^n|^2 + |g^n|^2 - |u^n-g^n|^2 ) - \frac{1}{2} ( |u^n|^2 + |g^{n-1}|^2 - |u^n-g^{n-1}|^2 ) \nonumber \\
         &= \frac{1}{2}( D|g^n|^2 - |u^n-g^n|^2 + |u^n-g^{n-1}|^2 ). \label{eq:uDg}
\end{align}
Letting $g=u$ and $g=\widetilde w$ in~\eqref{eq:uDg} we obtain
\begin{equation} \label{eq:uDHatUPart1}
  \Re (u,Du) = \frac{1}{2}( D|u|^2 + |Du|^2 ) \\
\end{equation}
and
\begin{equation} \label{eq:uDHatUPart2}
  \Re(u,D\widetilde w) = \frac{1}{2}( D|\widetilde w|^2 - |Du|^2 + |D^2u|^2 ).
\end{equation}
Replacing~\eqref{eq:uDHatUPart1} and~\eqref{eq:uDHatUPart2}
in~\eqref{eq:uDHatU} we obtain
\begin{equation} \label{eq:uDHatUBound}
  \Re (\mathit{I}) = \frac{1}{4}D(|u|^2+|\widetilde w|^2) + \frac{1}{4}|D^2u|^2.
\end{equation}
Notice that this equation expresses $\Re (\mathit{I})$ as the sum of a
telescoping term and a positive term, as desired.

\paragraph{$(\mathit{II})$ and $(\mathit{III})$:}

The operator $A$ is clearly skew-Hermitian since $\delta_x$ is. Therefore
\begin{align}
  & (\mathit{II}) = (u,Au) = -(Au,u) = -\overline{(u,Au)} \nonumber \\
  \implies & \Re(\mathit{II}) = 0. \label{eq:uAuZero}
\end{align}
The relation
\begin{equation} \label{eq:uBuZero}
  \Re (\mathit{III}) = \Re (u,Bu) = 0
\end{equation}
follows similarly, of course.

\paragraph{$(\mathit{IV})$:}
Lemma~\ref{lem:uABDuBound} below tells us that
\begin{equation} \label{eq:uABDuBoundThm}
   \Re (u,AB\,Du) \geq \frac{1}{4} D \left( |Au|^2 + |Bu|^2 \right) - \frac{1}{8} |D^2u|^2.
\end{equation}
Substituting~\eqref{eq:uDHatUBound}, \eqref{eq:uAuZero},
\eqref{eq:uBuZero}, and~\eqref{eq:uABDuBoundThm} into
equation~\eqref{eq:uProductHyperbolic} (recalling $b=2/3$) and taking
the real part we obtain
\begin{align}
	0 \geq \frac{1}{4}D \left( |u|^2+|\widetilde w|^2 \right) + \frac 16 \left( |Au|^2 + |Bu|^2 + |D^2u|^2 \right),
\end{align}
which is the sum of a telescoping term and a non-negative
term. Multiplying by the number four and summing the elements of the
above numerical time series from $m=2$ to $n$ completes the proof of
the theorem.
\end{proof}

The following lemma concerns the bound~\eqref{eq:uABDuBoundThm} used
in Theorem~\ref{thm:HyperbolicStability}.
\begin{lemma} \label{lem:uABDuBound}
  Any solution of equation~\eqref{eq:DiscreteHyperbolicAB}
  satisfies~\eqref{eq:uABDuBoundThm}.
\end{lemma}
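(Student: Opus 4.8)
The plan is to combine the finite-difference product rule~\eqref{eq:ProductRule} with a reduction to Fourier modes that brings in the recurrence~\eqref{eq:DiscreteHyperbolicAB}. First I would record the algebraic structure of the operators involved: since $\delta_x$ and $\delta_y$ are skew-Hermitian and commute (equation~\eqref{eq:DeltaXAntisym}), the operators $A$ and $B$ are skew-Hermitian, $AB$ is Hermitian, and $AB$ commutes with the temporal operator $D$. In particular $(u,AB\,u)$ and $(Du,AB\,Du)$ are \emph{real} time series, and Hermiticity of $AB$ gives $\Re(u,AB\,Du)=\Re(Du,AB\,u)$. Applying the product rule~\eqref{eq:ProductRule} with $P=AB$ then yields
\begin{equation*}
  \Re(u,AB\,Du)=\tfrac12 D(u,AB\,u)+\tfrac12(Du,AB\,Du).
\end{equation*}

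Next I would rewrite the telescoping part. Skew-Hermiticity of $A$ and $B$ gives $(Au,Bu)=-(u,AB\,u)$, a real quantity, so that $(u,AB\,u)=\tfrac12\big(|Au|^2+|Bu|^2-|(A+B)u|^2\big)$ and hence $\tfrac12 D(u,AB\,u)=\tfrac14 D(|Au|^2+|Bu|^2)-\tfrac14 D|(A+B)u|^2$. Substituting, the target bound~\eqref{eq:uABDuBoundThm} becomes equivalent to
\begin{equation*}
  \tfrac12(Du,AB\,Du)+\tfrac18|D^2u|^2\ \geq\ \tfrac14\,D|(A+B)u|^2 .
\end{equation*}
This is the crux of the matter, and it is precisely here that membership in the solution set must be exploited: the displayed inequality is \emph{false} for arbitrary time series (for a rapidly growing sequence on a mode with $\alpha\beta jk>0$ the left-hand side stays negative while the right-hand side is large and positive), so the recurrence~\eqref{eq:DiscreteHyperbolicAB} must be used to single out solutions.

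To bring in the recurrence I would pass to Fourier modes. All operators are simultaneously diagonalized by the interpolation basis~\eqref{eq:FourierInterpolant}, and by exactness of the trapezoidal rule the inner product~\eqref{eq:InnerProduct} splits as a sum over modes; it therefore suffices to prove the reduced inequality one mode $(j,k)$ at a time. On such a mode $A$ and $B$ act as multiplication by $iA_0$ and $iB_0$ with $A_0=\alpha\Delta t\,j$ and $B_0=\beta\Delta t\,k$ real, so $AB$ acts as the real scalar $-A_0B_0$ and $|(A+B)u|^2=(A_0+B_0)^2|u|^2$. Writing $d=(Du)^n$ and $e=(Du)^{n-1}$ (so $(D^2u)^n=d-e$), I would solve the scalar form of~\eqref{eq:DiscreteHyperbolicAB} for $u^n$ and use it to express $\Re(d\,\overline{u^n})$, and hence $|u^n|^2-|u^{n-1}|^2$, purely in terms of $d$ and $e$. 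Carrying out this substitution collapses the reduced inequality to
\begin{equation*}
  \tfrac18|d-e|^2+\tfrac14\big(A_0^2+B_0^2\big)|d|^2+\tfrac14\big(A_0+B_0\big)\,\Im(d\,\bar e)\ \geq\ 0 .
\end{equation*}

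Finally I would close this scalar estimate by expanding $|d-e|^2=|d|^2-2\Re(d\,\bar e)+|e|^2$ and bounding the resulting indefinite cross term $-\tfrac14\Re(d\,\bar e)+\tfrac14(A_0+B_0)\Im(d\,\bar e)$ in modulus by $\tfrac14\sqrt{1+(A_0+B_0)^2}\,|d|\,|e|$, while an AM--GM bound on the positive terms $\big(\tfrac18+\tfrac14(A_0^2+B_0^2)\big)|d|^2+\tfrac18|e|^2$ from below gives $\tfrac14\sqrt{1+2(A_0^2+B_0^2)}\,|d|\,|e|$; the inequality then follows from $2(A_0^2+B_0^2)\geq(A_0+B_0)^2$, i.e. from $(A_0-B_0)^2\geq0$, with the degenerate mode $A_0+B_0=0$ checked directly. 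I expect the main obstacle to be the middle step: manipulating the recurrence to eliminate $u^n$ and recast the non-sign-definite difference $D|(A+B)u|^2$ into a form in which the two available nonnegative reservoirs---the extrapolation defect $|D^2u|^2$ and the telescoped combination $|Au|^2+|Bu|^2$---provably absorb it. The notable feature, which is what pins down the precise constants $\tfrac18$ and $\tfrac14$ in~\eqref{eq:uABDuBoundThm}, is that this absorption ultimately rests on nothing more than $(A_0-B_0)^2\geq0$.
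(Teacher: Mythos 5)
Your proposal is correct, but it takes a genuinely different route from the paper's. The paper never diagonalizes: it tests equation~\eqref{eq:DiscreteHyperbolicAB} against $A\,Du$ and against $B\,Du$, uses the skew-Hermiticity of $A$ and $B$, the product rule~\eqref{eq:FDProductRule}, and a single Young-inequality estimate on $\tfrac12(D^2u,A\,Du)$ to obtain the two bounds $\Re(u,AB\,Du)\geq \tfrac12 D|Au|^2 - \tfrac18|D^2u|^2$ and $\Re(u,AB\,Du)\geq \tfrac12 D|Bu|^2 - \tfrac18|D^2u|^2$, and then simply averages them. You instead apply~\eqref{eq:ProductRule} with $P=AB$ to reduce~\eqref{eq:uABDuBoundThm} to the residual inequality $\tfrac12(Du,AB\,Du)+\tfrac18|D^2u|^2\geq\tfrac14 D|(A+B)u|^2$, correctly observe that this fails for arbitrary time series, and then invoke the recurrence mode by mode. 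I verified your mode computation: from $\tfrac32 d-\tfrac12 e+i(A_0+B_0)u^n-bA_0B_0\,d=0$ one indeed gets $\Re(d\,\overline{u^n})=-\Im(d\,\bar e)/\bigl(2(A_0+B_0)\bigr)$ (notably the coefficients $\tfrac32$ and $bA_0B_0$ drop out of the real part), which together with $\tfrac14(A_0+B_0)^2-\tfrac12 A_0B_0=\tfrac14(A_0^2+B_0^2)$ yields exactly your scalar inequality; your closing estimates (cross term bounded by $\tfrac14\sqrt{1+(A_0+B_0)^2}\,|d||e|$, AM--GM giving $\tfrac14\sqrt{1+2(A_0^2+B_0^2)}\,|d||e|$, and $(A_0-B_0)^2\geq0$) all check out, and the degenerate modes $A_0+B_0=0$ are indeed trivial there since $D|(A+B)u|^2=0$ and $-A_0B_0=A_0^2\geq0$. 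As for what each approach buys: the paper's operator-level argument uses only skew-Hermiticity and commutativity of $A$ and $B$, so it is in principle portable to other discretizations with those structural properties and requires no spectral decomposition; yours is tied to the simultaneous diagonalizability of the Fourier setting, but in exchange it makes transparent both that membership in the solution set is genuinely needed (your counterexample on a mode with $\alpha\beta jk>0$ can be made precise by taking $A_0B_0$ large relative to the growth rate) and the precise mechanism that fixes the constants $\tfrac18$ and $\tfrac14$---facts the paper's route leaves hidden behind its choice of test functions and Young parameters.
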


\begin{proof}
  Taking the inner product of~\eqref{eq:DiscreteHyperbolicAB} with
  $A\, Du$ (using the form~\eqref{eq:DuHatFormDDu} of $\widehat D u$) we obtain
\begin{equation} \label{eq:ADuProduct}
	0 = (Du,A\,Du) + \frac{1}{2}(D^2u,A\,Du) + (A\,u,A\,Du) + (B\,u,A\,Du) + b(AB\,Du,A\,Du).
\end{equation}
Since $A$ and $B$ commute and since $B$ is skew-Hermitian
(equation~\eqref{eq:DeltaXAntisym}) we have
\begin{align*}
  (B\,u,A\,Du) &= -(u,AB\,Du) 
\end{align*}
for the next-to-last term in~\eqref{eq:ADuProduct}. Therefore,
equation~\eqref{eq:ADuProduct} can be re-expressed in the form
\begin{alignat}{9} 
	(u,AB\,Du) &= (Du,A\,Du) && + \frac{1}{2}(D^2u,A\,Du) && + (A\,u,A\,Du)   && + b(AB\,Du,A\,Du) \label{eq:ADuProduct2} \\
	           &= (\mathit{I})        && + (\mathit{II})           && + (\mathit{III}) && + (\mathit{IV}). \nonumber
\end{alignat}
We consider each term in~\eqref{eq:ADuProduct2} in turn.

\paragraph{$(\mathit{I})$:}

Since $A$ is skew-Hermitian it follows that the real part of this term
vanishes:
\begin{align}
  & (\mathit{I}) = (Du,A\,Du) = -(A\,Du,Du) = -\overline{(Du,A\,Du)}= -\overline{(\mathit{I})}\nonumber \\
  \implies & \Re(\mathit{I}) = 0. \label{eq:uABDuPart1}
\end{align}

\paragraph{$(\mathit{II})$:}

Using Young's inequality 
\begin{equation}\label{eq:YoungIneq}
  ab \leq \frac r2 a^2 + \frac{1}{2r}b^2 
\end{equation} 
(which, as is easily checked, is valid for all real numbers $a$ and
$b$ and for all $r>0$) together with the Cauchy-Schwarz inequality we
obtain
\begin{align}
	\Re (\mathit{II}) &= \frac{1}{2}\Re (D^2u,A\,Du) \nonumber \\
	     &\geq -\frac{1}{2}|(D^2u,A\,Du)| \nonumber \\
	     &\geq -\frac{1}{2}|D^2u|\,|A\,Du| \nonumber \\
	     &\geq -\frac{1}{2}(\frac{1}{4}|D^2u|^2  + |A\,Du|^2 ) \nonumber \\
	     &= - \frac{1}{8}|D^2u|^2 - \frac{1}{2}|A\,Du|^2. \label{eq:uABDuPart2} 
\end{align}

\paragraph{$(\mathit{III})$:}

By the finite-difference product rule~\eqref{eq:FDProductRule} we obtain
\begin{alignat}{3}
	& \mathit{(III)} &&= (Au,D(Au)) \nonumber \\
	         &       &&= D(Au,Au) - (DAu,Au) + (DAu,DAu) \nonumber \\
	         &       &&= D|Au|^2 - \overline{(III)} + |A\,Du|^2 \nonumber \\
	\implies & \Re (III) &&= \frac{1}{2} D|Au|^2 + \frac{1}{2} |A\,Du|^2. \label{eq:uABDuPart3} 
\end{alignat}

\paragraph{$(\mathit{IV})$:}

Again using the fact that $B$ is skew-Hermitian and commutes with $A$ it follows that
\begin{align}
  & (\mathit{IV}) = b(BA\,Du,A\,Du) = -b(A\,Du,BA\,Du) = -\overline{\mathit{(IV)}} \nonumber \\
  \implies & \Re(\mathit{IV}) = 0. \label{eq:uABDuPart4}
\end{align}
Combining the real parts of
equations~\eqref{eq:ADuProduct2},~\eqref{eq:uABDuPart1},
\eqref{eq:uABDuPart2}, \eqref{eq:uABDuPart3} and~\eqref{eq:uABDuPart4} we obtain
\begin{equation} \label{eq:uABDuBoundA}
   \Re (u,AB\,Du) \geq \frac{1}{2} D|Au|^2 - \frac{1}{8} |D^2u|^2.
\end{equation}
An analogous result can be obtained by taking the inner product of
equation~\eqref{eq:DiscreteHyperbolicAB} with $B\,Du$ instead of $A\,Du$ and
following the same steps used to arrive at equation~\eqref{eq:uABDuBoundA}. The
result is
\begin{equation} \label{eq:uABDuBoundB}
   \Re (u,AB\,Du) \geq \frac{1}{2} D|Bu|^2 - \frac{1}{8} |D^2u|^2.
\end{equation}
The lemma now follows by averaging equations~\eqref{eq:uABDuBoundA}
and~\eqref{eq:uABDuBoundB}.
\end{proof}

\subsection{Fourier-based BDF2-ADI stability: parabolic equation\label{sec:Parabolic}}

This section establishes the unconditional stability of the BDF2-ADI
method for the constant-coefficient parabolic equation
\begin{equation}\label{eq:ParabolicEq}
U_t = \alpha\, U_{xx} + \beta\, U_{yy} + \gamma\, U_{xy}.
\end{equation}
Notice the inclusion of the mixed derivative term, which is treated explicitly using temporal extrapolation in the BDF-ADI algorithm. Theorem~\ref{thm:ParabolicStability} in this section proves, in particular, that extrapolation of the mixed derivative does not compromise the unconditional stability of the method.

The parabolicity conditions  $\alpha > 0$, $\beta > 0$ and
\begin{equation} \label{eq:ParabolicCondition}
	\gamma^2 \leq 4 \alpha \beta,
\end{equation}
which are assumed throughout this section, ensure that
\begin{equation}\label{eq:IntegralParabolicRelation}
  \int_0^{2\pi} \int_0^{2\pi} f \left( \alpha\, f_{xx} + \beta\, f_{yy} + \gamma\, f_{xy} \right) \,dx\,dy \leq 0
\end{equation}
%
for any twice continuously differentiable bi-periodic function $f$
defined in the domain~\eqref{eq:PeriodicDomain}---as can be established
easily by integration by parts and completion of the square in the sum
$\alpha (f_x)^2 + \gamma f_xf_y$ together with some simple
manipulations.  In preparation for the stability proof that is put
forth below in this section, in what follows we present a few
preliminaries concerning the BDF2-ADI algorithm for
equation~\eqref{eq:ParabolicEq}.

We first note that a calculation similar to that leading to
equation~\eqref{eq:DiscreteHyperbolicAB} shows that the Fourier-based
BDF2-ADI scheme for~\eqref{eq:ParabolicEq} can be expressed in the form
\begin{equation} \label{eq:DiscreteParabolic}
\widehat{D} u - \Delta t ( \alpha\, \delta_{xx} + \beta\, \delta_{yy} u + \gamma\, \delta_x \delta_y ) u + \Delta t \, \gamma\, \delta_x \delta_y \, D^2u + b(\Delta t)^2 \alpha\beta\, \delta_{xx} \delta_{yy} \, Du = 0.
\end{equation}
Letting
\begin{align*}
&A = -\Delta t\,\alpha\,\delta_{xx}, \\
&B = -\Delta t\,\beta\,\delta_{yy}, \\
&F = -\Delta t\,\gamma\,\delta_x\delta_y, \\
&L = A+B+F,
\end{align*}
equation~\eqref{eq:DiscreteParabolic} becomes
\begin{equation} \label{eq:DiscreteParabolic2}
\widehat{D} u + Lu - F\,D^2u + bAB\,Du = 0.
\end{equation}
Note that the operators $A$ and $B$ above do not coincide with the
corresponding $A$ and $B$ operators in Section~\ref{sec:Hyperbolic}.

In view of the exactness relation~\eqref{eq:InnerProductToIntegral}
together with the Fourier differentiation operators
(cf.~\eqref{eq:DerivativeExact}), it follows that
$A$, $B$, $AB$ and $L$ are positive semidefinite operators. Indeed, in
view of equation~\eqref{eq:IntegralParabolicRelation}, for example, we have
\begin{align} \label{eq:PosDef}
	(u,L\,u) &= -\frac{\Delta t}{(2\pi)^2} \int_0^{2\pi} \int_0^{2\pi} u_N \, \left( \alpha (u_N)_{xx} + \beta (u_N)_{yy} + \gamma (u_N)_{xy} \right) \,dx\,dy \nonumber \\
        &\geq 0;
\end{align}
similar relations for $A$, $B$ and $AB$ follow directly by integration
by parts.

Finally we present yet another consequence of the parabolicity
condition~\eqref{eq:ParabolicCondition} which will prove useful: for
any grid function $g$ we have
\begin{equation}
  |Fg|^2 = \gamma^2\,(\Delta t)^2 (\delta_x\delta_y g,\delta_x\delta_y g) 
  \leq 4 \alpha\beta(\Delta t)^2 (g, \delta_x^2 \delta_y^2 g)= 4 (g,ABg). 
\end{equation}
Thus, defining the seminorm
\begin{equation}\label{eq:Seminorm}
  |u|_P = \sqrt{(u,Pu)}
\end{equation}
%
for a given positive semidefinite operator $P$ and using $P = AB$
we obtain
\begin{equation}
  |Fg|^2 \leq 4 |g|^2_{AB}. \label{eq:FtoAB}
\end{equation}

The following theorem can now be established.
\begin{theorem} \label{thm:ParabolicStability} 
The solution $u$ of the
Fourier-based BDF2-ADI scheme ~\eqref{eq:DiscreteParabolic} for
equation~\eqref{eq:ParabolicEq} with initial conditions $u^0$, $u^1$
satisfies
$$ \frac{1}{4}|u^n|^2 + \frac{1}{4}|\widetilde{u}^{n+1}|^2 + \frac{1}{3} |(Du)^n|_{AB}^2 + \frac{1}{4}\sum_{m=1}^n |D^2u|^2 + \sum_{m=1}^n |u^n|_L^2 \leq M $$
for $n \geq 2$, where
\begin{align*}
	M =& \frac{1}{4}|u^1|^2 + \frac{1}{4}|\widetilde{u}^{2}|^2 + \frac{1}{3} |u^1|_{AB}^2 + 3 |u^1|_L - \Re(u^1,F\,(Du)^1)  \\
     & {} + 3|(Du)^1|^2 + \frac 32 \left( |(Du)^1|_A^2 + |(Du)^1|_B^2 \right) + \frac{1}{3} |(Du)^1|_{AB}^2.
\end{align*}
In particular, the scheme is unconditionally stable in the sense that
a bound of the form given in equation~\eqref{eq:Stable} is satisfied.
\end{theorem}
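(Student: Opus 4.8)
The plan is to run the same energy argument used for Theorem~\ref{thm:HyperbolicStability}, now applied to the discrete equation~\eqref{eq:DiscreteParabolic2}, and to organize the resulting terms as a telescoping quantity plus non-negative quantities. First I would take the inner product of~\eqref{eq:DiscreteParabolic2} with $u$ and split the result into the four pieces $(u,\widehat D u)$, $(u,Lu)$, $-(u,F\,D^2u)$ and $b(u,AB\,Du)$. The first piece is handled exactly as in the hyperbolic case: the expression~\eqref{eq:uDHatUBound} gives $\Re(u,\widehat D u)=\tfrac14 D(|u|^2+|\widetilde w|^2)+\tfrac14|D^2u|^2$. The second piece is the new dissipative term and is immediately favourable: since $L$ is positive semidefinite by~\eqref{eq:PosDef}, $\Re(u,Lu)=|u|_L^2\ge 0$. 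The fourth piece is now much easier than its hyperbolic counterpart: because $AB$ is self-adjoint, positive semidefinite and commutes with $D$, the product-rule identity~\eqref{eq:ProductRule} applies directly (no analogue of Lemma~\ref{lem:uABDuBound} is needed to treat this particular term), giving $\Re\,b(u,AB\,Du)=\tfrac{b}{2}D|u|_{AB}^2+\tfrac{b}{2}|Du|_{AB}^2$, a telescoping term plus a non-negative term.

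The crux is the third piece $-(u,F\,D^2u)$, the explicitly extrapolated mixed-derivative term, and controlling it is where the real work lies. A naive Cauchy--Schwarz/Young bound based on $|Fu|^2\le 4|u|_{AB}^2$ (equation~\eqref{eq:FtoAB}) produces an uncontrolled $|u|_{AB}^2$ contribution with no matching dissipation, so a more careful decomposition is required. I would instead apply the finite-difference product rule~\eqref{eq:FDProductRule} (in the inner-product form used to derive~\eqref{eq:ProductRule}) with the series $Fu$ and $Du$ to write $(u,F\,D^2u)=(Fu,D(Du))=D(Fu,Du)-(Du,F\,Du)+(F\,Du,D^2u)$. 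This moves the cost of the explicit term onto the increment $Du$: it yields a genuine telescoping term $-\Re D(Fu,Du)$ (which accounts for the boundary contribution $-\Re(u^1,F\,(Du)^1)$ appearing in $M$) together with the indefinite cross terms $(Du,F\,Du)$ and $\Re(F\,Du,D^2u)$, both of which can now be bounded in terms of $|Du|_{AB}^2$, $|Du|^2$ and $|D^2u|^2$ using~\eqref{eq:FtoAB}, Cauchy--Schwarz and Young's inequality~\eqref{eq:YoungIneq}.

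To absorb these residual cross terms I would, in analogy with Lemma~\ref{lem:uABDuBound}, derive one or more auxiliary energy estimates by taking the inner product of~\eqref{eq:DiscreteParabolic2} with increment-type test functions such as $Du$ (and, if needed, $A\,Du$ and $B\,Du$, as in the hyperbolic lemma), and then add suitable positive multiples of these estimates to the primary one. Each auxiliary estimate is favourable in its principal terms---$\Re(\widehat D u,Du)$ produces a non-negative $|(Du)^n|^2$ via the identities~\eqref{eq:DuHatFormDDu}--\eqref{eq:DuHatFormTilde}, $\Re(Lu,Du)$ produces $\tfrac12 D|u|_L^2+\tfrac12|Du|_L^2$ through~\eqref{eq:ProductRule}, and $b(AB\,Du,Du)=b|Du|_{AB}^2\ge 0$---so that the combination supplies exactly the reservoir of non-negative $Du$-quantities needed to dominate the indefinite mixed-derivative terms. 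The initial-index values of these telescoped quantities are what collect into the various contributions $3|(Du)^1|^2$, $\tfrac32(|(Du)^1|_A^2+|(Du)^1|_B^2)$ and $\tfrac13|(Du)^1|_{AB}^2$ appearing in $M$.

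The main obstacle is the bookkeeping of constants: the weights of the auxiliary estimates and the several Young parameters must be chosen so that every indefinite term arising from the explicit mixed derivative is strictly dominated by the available dissipation ($|u|_L^2$, $|Du|_{AB}^2$, $|D^2u|^2$, and the auxiliary $|Du|$-quantities). Succeeding here is precisely what establishes that explicit extrapolation of $U_{xy}$ does not compromise unconditional stability---the point flagged immediately before the theorem---and the key analytic input is the operator inequality~\eqref{eq:FtoAB}, itself a consequence of the parabolicity condition~\eqref{eq:ParabolicCondition}. Once the combined identity has been reduced to the form $0\ge D(\mathcal E)+(\text{non-negative terms})$ for the appropriate energy $\mathcal E$, I would sum from $m=2$ to $n$: the telescoping collapses to the stated left-hand side, the initial-index terms assemble into $M$, and the resulting uniform bound is of the form~\eqref{eq:Stable}, which gives unconditional stability.
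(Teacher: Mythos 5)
Your plan reproduces the paper's architecture almost exactly: the primary estimate is the inner product of~\eqref{eq:DiscreteParabolic2} with $u$; the terms $(u,\widehat D u)$, $(u,Lu)\geq 0$ and $b(u,AB\,Du)$ are handled just as in the paper (the last via~\eqref{eq:ProductRule}, with no analogue of Lemma~\ref{lem:uABDuBound} needed, as you say); your product-rule identity $(u,F\,D^2u)=D(Fu,Du)-(Du,F\,Du)+(F\,Du,D^2u)$ is, via the Hermitian character of $F$, precisely the paper's decomposition $-D(u,F\,Du)+(Du,F\,Dv)$; and your auxiliary estimate obtained by testing with $Du$ is exactly the paper's Lemma~\ref{lem:DuParabolicBound}, invoked with weight $3$. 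Your attribution of $-\Re(u^1,F\,(Du)^1)$ and of the $Du$-quantities in $M$ to the telescoped initial values is also essentially right (one small slip: the $\frac13|(Du)^1|_{AB}^2$ in $M$ arises in the \emph{primary} estimate, from the telescoping pair $\frac13(|Du|_{AB}^2-|Dv|_{AB}^2)$, not from the auxiliary one).

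The gap sits inside the deferred ``bookkeeping of constants,'' and it is not merely tedious: with only the tools you list---\eqref{eq:FtoAB}, Cauchy--Schwarz and Young, charging everything to $|Du|^2$, $|Du|^2_{AB}$ and $|D^2u|^2$---the auxiliary $Du$-estimate provably cannot be closed. Per copy of that estimate the supplies are $|Du|^2$ (coefficient $1$, from~\eqref{eq:DuHatFormDDu}), $\frac14|D^2u|^2$, and $\frac23|Du|^2_{AB}$; bounding $-(Du,F\,D^2u)$ through $|F\,Du|\leq 2|Du|_{AB}$ costs $t\,|Du|^2_{AB}+\frac1t|D^2u|^2$, and a $|D^2u|^2$ deficit $\sigma$ can only be traded for a $4\sigma$ loss of $|Du|^2$ (via $|D^2u|^2\leq 2|Du|^2+2|Dv|^2$ and telescoping); the budget then requires $t\leq\frac23$ and $\frac1t\leq\frac14+\sigma$ with $4\sigma\leq 1$, which is impossible since $t\leq\frac23$ forces $\sigma\geq\frac54$. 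The paper escapes by two devices your toolkit misses: it splits $-(Du,F\,D^2u)=-\frac12(Du,F\,Du)+\frac12(Du,F\,Dv)-\frac12(Du,F\,D^2u)$, cancelling $-\frac12(Du,F\,Du)$ \emph{exactly} against the $F$-part of $\Re(Du,Lu)$ (no inequality spent), and it bounds $\frac12(Du,F\,Dv)$ not via~\eqref{eq:FtoAB} but via the weighted skew-Hermitian splitting~\eqref{eq:DuFDv}, charging the cost to the $\frac12|Du|^2_{A+B}$ dissipation that $\Re(Du,Lu)$ supplies---an essential move, since the $A{+}B$ and $AB$ seminorms dominate neither one another nor are replaceable by the quantities on your list (your hedge about also testing with $A\,Du$ and $B\,Du$ gestures toward $A{+}B$ dissipation, but the paper gets it for free from $L$). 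Finally, your closing reduction ``to the form $0\geq D(\mathcal E)+(\text{non-negative})$'' glosses a second real issue: $\mathcal E$ contains the indefinite piece $-\Re(u,F\,Du)$, so after summation one must still estimate $-\Re(u^n,F\,(Du)^n)$ at the top index; this consumes the entire telescoped gain $\frac13|u^n|^2_{AB}$ plus a further $3|(Du)^n|^2$, which is why the \emph{pointwise} bound $|(Du)^n|^2\leq M_2$ in Lemma~\ref{lem:DuParabolicBound}---not only the summed bound---is needed to reach the stated $M$.
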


\begin{proof}
Taking the inner product of~\eqref{eq:DiscreteParabolic2} with $u$ we obtain 
\begin{alignat}{7} \label{eq:UParabolicProduct}
	0 &= (u,\widehat{D}u) && + (u,Lu) && - (u,F\,D^2u) && + b(u,AB\,Du) \\
	  &= (I)          && + (\mathit{II})   && + (\mathit{III})       && + (\mathit{IV}), \nonumber
\end{alignat}
where $(I) = (u,\widehat Du)$, $(\mathit{II}) = (u,Lu)$, etc. As in
Theorem~\ref{thm:HyperbolicStability}, we re-express the above equation using
telescoping and non-negative terms to obtain the desired energy bound.

The term $(I)$ already occurs in the proof of
Theorem~\ref{thm:HyperbolicStability}; there we obtained the relation
\begin{equation} \label{eq:uDHatUBound2}
  \Re (\mathit{I}) = \frac{1}{4}D(|u|^2+|\widetilde w|^2) + \frac{1}{4}|D^2u|^2,
\end{equation}
where $\widetilde w$ is defined in~\eqref{eq:WDefinition}. The term
$(\mathit{II}) = |u|_L^2$, in turn, is non-negative (see
equation~\eqref{eq:PosDef}) and thus requires no further
treatment. The remaining two terms are considered in what follows.

\paragraph{$(\mathit{III})$:}

This term presents the most difficulty, since $F$ is not positive
semi-definite. In what follows the term ($\mathit{III}$) is
re-expressed as a a sum of two quantities, the first one of which can
be combined with a corresponding term arising from the quantity
$\mathit{(IV)}$ to produce a telescoping term, and the second of which
will be addressed towards the end of the proof by utilizing
Lemma~\ref{lem:DuParabolicBound} below.

Let $v$ denote the time series obtained by shifting $u$
backwards by one time step:
\begin{equation} \label{eq:VDefinition}
  v = \{ v^n = u^{n-1} : n\geq 1 \};
\end{equation}
clearly we have
\begin{equation}\label{eq:DuvDef} 
  Du = u - v\quad \mbox{and}\quad D^2u = Du - Dv.
\end{equation}
Thus, using the finite difference product
rule~\eqref{eq:FDProductRule} and the second relation in~\eqref{eq:DuvDef}
we obtain
\begin{align*}
	(\mathit{III}) &= -(u,F\,D(Du))=-(u,D\,F(Du)) \\
	      &= -D(u,F\,Du) + (Du,F\,Du) - (Du,F\,D^2u)  \\
	      &= -D(u,F\,Du) + (Du,F\,Dv). 
\end{align*}
Applying the Cauchy-Schwarz inequality and Young's
inequality~\eqref{eq:YoungIneq} with $r=6$ together with~\eqref{eq:FtoAB} we
obtain
\begin{align}
  \Re (\mathit{III}) &\geq -D\Re (u,F\,Du) - |Du|\,|F\,Dv| \nonumber \\
        &\geq -D\Re(u,F\,Du) - 3|Du|^2 - \frac{1}{12} |F\,Dv|^2 \nonumber \\
        &\geq -D\Re(u,F\,Du) - 3|Du|^2 - \frac{1}{3} |Dv|_{AB}^2. \label{eq:UParabolicProductPart3}
\end{align}
The last term in the above inequality will be combined with an
associated expression in $(\mathit{IV})$ below to produce a
telescoping term.

\paragraph{$(\mathit{IV})$:}

Using the finite difference product rule~\eqref{eq:ProductRule}
together with the fact that $AB$ is a Hermitian positive semi-definite
operator we obtain
\begin{align}
  \Re (\mathit{IV}) &= \frac 23 \Re (u,AB\,Du)= \frac 23 \Re (Du,ABu) \nonumber \\
  &= \frac{1}{3} D(u,AB\,u) + \frac{1}{3} (Du,AB\,Du) \nonumber \\
  &= \frac{1}{3} D|u|_{AB}^2 + \frac{1}{3} |Du|_{AB}^2 \label{eq:UParabolicProductPart4}
\end{align}
(see equation~\eqref{eq:Seminorm}).
Substituting~\eqref{eq:uDHatUBound2},
\eqref{eq:UParabolicProductPart3} and
\eqref{eq:UParabolicProductPart4} into
equation~\eqref{eq:UParabolicProduct}, recalling
equation~\eqref{eq:WDefinition} and taking real parts, we obtain
\begin{align} \label{eq:UParabolicProduct2}
  0 \geq& \frac{1}{4}D(|u|^2+|\widetilde w|^2) + \frac{1}{4}|D^2u|^2 + |u|_L^2 - D\Re(u,F\,Du) - 3|Du|^2 \nonumber \\
        & {} + \frac{1}{3} ( |Du|_{AB}^2 - |Dv|_{AB}^2 ) + \frac{1}{3} D|u|_{AB}^2 \nonumber \\
       =& D\left( \frac{1}{4} |u|^2 + \frac{1}{4} |\widetilde w|^2 + \frac{1}{3} |u|_{AB}^2 + \frac{1}{3} |Du|_{AB}^2 - \Re (u,F\,Du) \right) \nonumber \\
        & {} + |u|_L^2 + \frac{1}{4}|D^2u|^2 - 3|Du|^2.
\end{align}
Adding the time series~\eqref{eq:UParabolicProduct2} from $m = 2$ to
$n$ and using the identity $\widetilde w^n=\widetilde u^{n+1}$ we
obtain
\begin{eqnarray} \label{eq:SumParabolicProduct}
	M_1 &\geq& \frac{1}{4}|u^n|^2 + \frac{1}{4}|\widetilde{u}^{n+1}|^2 + \, \frac{1}{3} |u^n|_{AB}^2 + \frac{1}{3} |(Du)^n|_{AB}^2 + \sum_{m=2}^n |u^n|_L^2 \nonumber \\
	    & & + \frac{1}{4}\sum_{m=2}^n |(D^2u)^n|^2 - 3 \sum_{m=2}^n |(Du)^m|^2 - \Re(u^n,F\,(Du)^n)
\end{eqnarray}
where 
$$ M_1 = \frac{1}{4}|u^1|^2 + \frac{1}{4}|\widetilde{u}^{2}|^2 + \frac{1}{3} |u^1|_{AB}^2 + \frac{1}{3} |(Du)^1|_{AB}^2 - \Re(u^1,F\,(Du)^1). $$
Using Cauchy-Schwarz and Young's inequalities along with the
parabolicity relation~\eqref{eq:FtoAB} and the fact that $F$ is a
Hermitian operator, the last term $-\Re (u^n,F\,(Du)^n)$
in~\eqref{eq:SumParabolicProduct} is itself estimated as follows:
\begin{eqnarray}
	-\Re(u^n,F\,(Du)^n) &=& -\Re(F\,u^n,(Du)^n) \nonumber \\
	               &\geq & -|F\,u^n||(Du)^n| \nonumber \\
	               &\geq & -\frac{1}{12} |F\,u^n|^2 - 3|(Du)^n|^2 \nonumber \\
	               &\geq & -\frac{1}{3} |u^n|_{AB}^2 - 3|(Du)^n|^2. \nonumber
\end{eqnarray}
Equation~\eqref{eq:SumParabolicProduct} may thus be re-expressed in
the form
\begin{eqnarray}
	\frac{1}{4}|u^n|^2 &+& \frac{1}{4}|\widetilde{u}^{n+1}|^2 + \frac{1}{3} |(Du)^n|_{AB}^2 + \sum_{m=2}^n |u^n|_L^2 + \frac{1}{4}\sum_{m=2}^n |D^2u|^2 \nonumber \\
	&\leq & M_1 + 3|(Du)^n|^2 + 3 \sum_{m=2}^n |(Du)^m|^2. \label{eq:SumWithDu} 
\end{eqnarray}
Finally, applying Lemma~\ref{lem:DuParabolicBound} below to the last two
terms on the right-hand side of equation~\eqref{eq:SumWithDu} we
obtain
$$
   3|(Du)^n|^2 + 3 \sum_{m=2}^n |(Du)^m|^2 \leq 3 M_2, 
$$
where the constant $M_2$ is given by
equation~\eqref{eq:DuBoundConstant} below, and the proof of the
theorem is thus complete.
\end{proof}

The following lemma, which provides a bound on sums of squares of the
temporal difference $Du$, is used in the proof of
Theorem~\ref{thm:ParabolicStability} above.
\begin{lemma} \label{lem:DuParabolicBound} The solution $u$ of the
  Fourier-based BDF2-ADI scheme ~\eqref{eq:DiscreteParabolic} for
  equation~\eqref{eq:ParabolicEq} with initial conditions $u^0$, $u^1$
  satisfies
\begin{equation} \label{eq:DuParabolicBound}
	|(Du)^n|^2 + |u^n|^2_L + \frac{1}{2} \left( \, |(Du)^n|_A^2 + |(Du)^n|_B^2 \, \right) + \sum_{m=2}^n |(Du)^m|^2 \leq M_2
\end{equation}
for $n \geq 2$, where
\begin{equation} \label{eq:DuBoundConstant}
  M_2 = |(Du)^1|^2 + |u^1|^2_L + \frac{1}{2} \left( \, |(Du)^1|_A^2 + |(Du)^1|_B^2 \, \right).
\end{equation}
\end{lemma}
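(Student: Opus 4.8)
The plan is to establish \eqref{eq:DuParabolicBound} by an energy argument of the same flavor as the proofs of Theorem~\ref{thm:ParabolicStability} and Lemma~\ref{lem:uABDuBound}, but now driving the estimate with the temporal difference $Du$ in place of $u$. A useful preliminary observation is that, since the scheme \eqref{eq:DiscreteParabolic2} is linear with time-independent operators and since $D$ commutes with each of $\widehat D$, $L$, $F$ and $AB$, the differenced sequence $p=Du$ itself satisfies $\widehat D p + Lp - F\,D^2p + bAB\,Dp = 0$. The quantities appearing on the left of \eqref{eq:DuParabolicBound} then suggest forming three scalar products of \eqref{eq:DiscreteParabolic2}---one with $Du$, one with $A\,Du$ and one with $B\,Du$---and adding them with suitable positive weights.

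First I would take the inner product with $Du$. Exactly as in the evaluation of term $(\mathit I)$ in Theorem~\ref{thm:HyperbolicStability} (using \eqref{eq:DuHatFormTilde}), the $\widehat D u$ contribution yields a telescoping part $\tfrac14 D|Du|^2$ together with the non-negative terms $|(Du)^n|^2$ and $\tfrac14|(D^2u)^n|^2$; the $Lu$ contribution, via the product rule \eqref{eq:ProductRule} and the positive semidefiniteness \eqref{eq:PosDef} of $L$, gives $\tfrac12 D|u|_L^2+\tfrac12|Du|_L^2$; and the $bAB\,Du$ term gives the non-negative quantity $\tfrac23|Du|_{AB}^2$. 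Summing the resulting telescoping inequality from $m=2$ to $n$ already produces the terms $|u^n|_L^2$, $|(Du)^n|^2$ and the running sum $\sum_{m=2}^n|(Du)^m|^2$ that occur in \eqref{eq:DuParabolicBound}. Next I would take the inner products with $A\,Du$ and with $B\,Du$. Following the polarization computation used for term $(\mathit I)$ above, the $\widehat D u$ contributions here produce the telescoping terms $\tfrac14 D|Du|_A^2$ and $\tfrac14 D|Du|_B^2$---which, after scaling, account for the $\tfrac12(|(Du)^n|_A^2+|(Du)^n|_B^2)$ contributions in \eqref{eq:DuParabolicBound}---together with the non-negative seminorms $|(Du)^n|_A^2$, $\tfrac14|(D^2u)^n|_A^2$ (and their $B$ analogues). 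These seminorms of the second difference $D^2u$ are the crucial by-product of the auxiliary pairings.

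The hard part is the mixed-derivative operator $F$, which is Hermitian but indefinite and so resists a direct sign argument. The $Du$ pairing generates the cross term $-\Re(F\,D^2u,Du)$, and the $A\,Du$, $B\,Du$ pairings generate the analogous cross terms $-\Re(F\,D^2u,A\,Du)$ and $-\Re(F\,D^2u,B\,Du)$. The key estimate is the parabolicity bound: the completion-of-square argument behind \eqref{eq:FtoAB} gives both $|Fg|\le 2|g|_{AB}$ and, equivalently, $|(g,F\,g)|\le |g|_A^2+|g|_B^2$ for every grid function $g$. Applying these with $g=D^2u$, together with the Cauchy--Schwarz and Young \eqref{eq:YoungIneq} inequalities, I would bound each $F$ cross term by a controllable multiple of $|(D^2u)^n|_A^2+|(D^2u)^n|_B^2$ (and of $|(D^2u)^n|^2$ and $|Du|_{AB}^2$), and then fix the weights of the three pairings, and treat the non-$F$ cross terms of the auxiliary pairings, so that these contributions are absorbed into the non-negative reservoir $\tfrac14|(D^2u)^n|_A^2$, $\tfrac14|(D^2u)^n|_B^2$, $\tfrac14|(D^2u)^n|^2$, $\tfrac23|Du|_{AB}^2$ assembled above. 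It is essential that the absorption be performed against these matching seminorms of $D^2u$ rather than against the plain norm $|(D^2u)^n|^2$: because the $AB$-seminorm is not uniformly comparable to the discrete $L^2$ norm, any attempt to control $|F\,D^2u|$ by $|D^2u|$ alone would introduce a $\Delta t$- (indeed $h$-) dependent constant and would destroy the unconditional character of the estimate. This balancing of the indefinite $F$ contributions is the only genuine obstacle; everything else is the product-rule and polarization bookkeeping already exploited in Theorem~\ref{thm:ParabolicStability} and Lemma~\ref{lem:uABDuBound}.

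Once the $F$ cross terms are disposed of, the accumulated relation takes the telescoping form $D\,E^n + (\text{non-negative terms, including } |(Du)^n|^2)\le 0$, where $E^n$ agrees, on the terms retained in \eqref{eq:DuParabolicBound}, with the energy $|(Du)^n|^2+|u^n|_L^2+\tfrac12(|(Du)^n|_A^2+|(Du)^n|_B^2)$; the weights of the three pairings are chosen precisely so that the telescoped quantity reduces at $n=1$ to the constant $M_2$ of \eqref{eq:DuBoundConstant}. Summing this inequality from $m=2$ to $n$ and discarding the remaining non-negative telescoped terms then yields \eqref{eq:DuParabolicBound}, completing the proof.
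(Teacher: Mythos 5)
Your proposal correctly identifies the $F$ cross terms as the crux and gets the single-pairing bookkeeping right (the $\widehat D u$, $Lu$ and $AB\,Du$ contributions, and the parabolicity bounds $|Fg|\le 2|g|_{AB}$ and $|(g,Fg)|\le |g|_A^2+|g|_B^2$), but the mechanism you propose for disposing of those cross terms is where the argument fails, and it is precisely where the paper does something you never invoke. The paper closes the entire estimate from the \emph{single} pairing with $Du$: the telescoping term $\tfrac14 D|Du|_{A+B}^2$, which produces the $\tfrac12\left(|(Du)^n|_A^2+|(Du)^n|_B^2\right)$ in \eqref{eq:DuParabolicBound}, does not come from auxiliary pairings with $A\,Du$ and $B\,Du$ but from the time-shift splitting $D^2u=Du-Dv$ with $v^n=u^{n-1}$ (see \eqref{eq:VDefinition} and \eqref{eq:DuFDDu}). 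Writing $-(Du,F\,D^2u)=-\tfrac12(Du,F\,Du)+\tfrac12(Du,F\,Dv)-\tfrac12(Du,F\,D^2u)$, the paper (i) cancels $-\tfrac12(Du,F\,Du)$ \emph{exactly} against the $F$-part of $\tfrac12(Du,L\,Du)$ arising from the product rule applied to $(Du,Lu)$ (cf.\ \eqref{eq:DuProductPart2}); (ii) bounds $\tfrac12\Re(Du,F\,Dv)\ge -\tfrac14|Du|_{A+B}^2-\tfrac14|Dv|_{A+B}^2$, whose $Dv$ piece is the shifted $Du$ piece, so that together with the leftover $\tfrac12|Du|_{A+B}^2$ it telescopes; and (iii) is left with only the \emph{halved} cross term $-\tfrac12\Re(Du,F\,D^2u)$. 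Your sketch keeps $|Du|_L^2$ whole (forgoing the cancellation in (i)), never introduces $Dv$, and applies the parabolicity bounds directly with $g=D^2u$ at full strength.

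That full strength is fatal. Young's inequality absorbs a cross term $2XY$ into $c_1X^2+c_2Y^2$ only if $c_1c_2\ge 1$ (and $XY$ only if $c_1c_2\ge\tfrac14$). In your $Du$ pairing, $-\Re(Du,F\,D^2u)\ge -2|Du|_{AB}\,|D^2u|$ must be absorbed into the reservoirs $\tfrac23|Du|_{AB}^2$ and $\tfrac14|D^2u|^2$, whose coefficient product is $\tfrac16<1$; even the paper's halved term saturates the budget ($\tfrac23\cdot\tfrac38=\tfrac14$ exactly) and still leaves a deficit $-\tfrac18|D^2u|^2$, which the paper repairs by the further step $|Du|^2-\tfrac18|D^2u|^2\ge\tfrac12|Du|^2+\tfrac14D|Du|^2$ (again using $D^2u=Du-Dv$)---a step with no analogue in your sketch. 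Rerouting the absorption through the auxiliary reservoirs does not rescue the plan: each auxiliary pairing spawns its own full-strength cross term $-\Re(A\,Du,F\,D^2u)$, whose matching reservoirs $\tfrac14|D^2u|_A^2$ (from $\widehat Du$) and $\tfrac23|Du|_{A^2B}^2$ (from $b\,AB\,Du$) again have product $\tfrac16$; and since every reservoir within a pairing carries the same weight as that pairing's cross term, no choice of weights changes these ratios---pairing with still higher powers of $A$ and $B$ only regresses. In the admissible degenerate case $\gamma^2=4\alpha\beta$ the bounds \eqref{eq:FtoAB} and $|(g,Fg)|\le|g|_A^2+|g|_B^2$ are sharp, so there is no hidden slack to exploit. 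Finally, even if the auxiliary pairings could be closed (e.g.\ by importing the $Dv$-splitting into each of them---at which point they are redundant, since the single $Du$ pairing already suffices), they telescope additional nonnegative energies, for instance terms of the type $(u,AL\,u)$ and $\tfrac14|Du|_A^2$ with weights foreign to \eqref{eq:DuBoundConstant}, whose values at $n=1$ enter the constant; your assertion that the weights can be tuned so the telescoped quantity reduces at $n=1$ exactly to $M_2$ is therefore not substantiated.
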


\begin{proof}

We start by taking the inner product of equation~\eqref{eq:DiscreteParabolic2}
with $Du$ to obtain
\begin{alignat}{7} \label{eq:DuParabolicProduct}
	0 &= (Du,\widehat{D}u) && + (Du,Lu) && - (Du,F\,D^2u) && + b(Du,AB\,Du) \\
	  &= (I)           && + (\mathit{II})    && + (\mathit{III})        && + (\mathit{IV}). \nonumber
\end{alignat}
We now estimate each of the terms $(I)$ through $(\mathit{IV})$ in
turn; as it will become apparent, the main challenge in this proof
lies in the estimate of the term $(\mathit{III})$.

\paragraph{$(I)$:}

Using~\eqref{eq:DuHatFormDDu} and the finite difference
product rule~\eqref{eq:ProductRule}, $(I)$ can be expressed in the form
\begin{align}
 \Re (I) &= \Re (Du,Du + \frac{1}{2}D^2u) \nonumber \\
     &= |Du|^2 + \frac{1}{4} D|Du|^2 + \frac{1}{4} |D^2u|^2. \label{eq:DuProductPart1}
\end{align}

\paragraph{$(\mathit{II})$:}

Using equation~\eqref{eq:ProductRule} we obtain
$$ \Re (\mathit{II}) = \Re (Du,Lu) = \frac{1}{2} D(u,Lu) + \frac{1}{2} (Du,L\,Du). $$
Since $L=A+B+F$ we may write
\begin{equation} \label{eq:DuProductPart2}
  \Re (\mathit{II}) = \frac{1}{2} D|u|_L^2 + \frac{1}{2} |Du|_{A+B}^2 + \frac{1}{2} (Du,F\,Du).
\end{equation}
The last term in this equation (which is a real number in view of the
Hermitian character of the operator $F$) will be used below to cancel
a corresponding term in our estimate of $(\mathit{III})$.

\paragraph{$(\mathit{III})$:}

Using~\eqref{eq:VDefinition} together with the second equation
in~\eqref{eq:DuvDef}, $(\mathit{III})$ can be expressed in the form
\begin{align} \label{eq:DuFDDu}
	(\mathit{III}) &= -(Du,F\,D^2 u) \nonumber \\
        &= -\frac{1}{2} (Du,F\,Du) + \frac{1}{2} (Du,F\,Dv) - \frac{1}{2} (Du,F\,D^2 u).
\end{align}
The first term on the right-hand side of~\eqref{eq:DuFDDu} will be
used to cancel the last term in~\eqref{eq:DuProductPart2}. Hence it
suffices to obtain bounds for the second and third terms on the
right-hand side of equation~\eqref{eq:DuFDDu}.

To estimate the second term in~\eqref{eq:DuFDDu} we consider the relation
\begin{equation}\label{eq:DuFDv}
  \frac 12 (Du,F\,Dv) = \frac 12\gamma\,\Delta t \, (Du,\delta_x\delta_y Dv) = -\frac{\gamma}{4}\,\Delta t \, (\delta_x \, Du,\delta_y \, Dv) - \frac{\gamma}{4}\,\Delta t \, (\delta_y \, Du,\delta_x \, Dv),
\end{equation}
which follows from the fact that $\delta_x$ and $\delta_y $ are
skew-Hermitian operators. Taking real parts and applying the
Cauchy-Schwarz and Young inequalities together with the parabolicity
condition~\eqref{eq:ParabolicCondition} we obtain
\begin{align}
	\frac{1}{2} \Re(Du,F\,Dv) \geq & -\frac{\sqrt{\alpha \beta}}{2} \, \Delta t \, \left( \frac{1}{2} \sqrt{\frac{\alpha}{\beta}} \, |\delta_x \, Du|^2 + \frac{1}{2} \sqrt{\frac{\beta}{\alpha}} \, |\delta_y \, Dv|^2 \right) \nonumber \\
			    & {} - \frac{\sqrt{\alpha \beta}}{2} \, \Delta t \, \left( \frac{1}{2} \sqrt{\frac{\beta}{\alpha}} \, |\delta_y \, Du|^2 + \frac{1}{2} \sqrt{\frac{\alpha}{\beta}} \, |\delta_x \, Dv|^2 \right) \nonumber \\
			   =& -\frac{1}{4} \Delta t \, ( \alpha \, |\delta_x \, Du|^2 + \beta \, |\delta_y \, Du|^2 ) - \frac{1}{4} \Delta t \, ( \alpha \, |\delta_x \, Dv|^2 + \beta \, |\delta_y \, Dv|^2 ) \nonumber \\
			   =& -\frac{1}{4} |Du|^2_{A+B} - \frac{1}{4} |Dv|^2_{A+B}. \label{eq:DuFDDuPart1}
\end{align}

To estimate third term in~\eqref{eq:DuFDDu} we once again use the
Cauchy-Schwarz and Young inequalities and we exploit the
relation~\eqref{eq:FtoAB}; we thus obtain
\begin{align}
	-\frac{1}{2} \Re(Du,F\,D^2 u) &= -\frac{1}{2} \Re(F\,Du,D^2u) \nonumber \\
	                           &\geq -\frac{1}{6} |F\,Du|^2 - \frac{3}{8} |D^2u|^2 \nonumber \\
                              &\geq -\frac{2}{3} |Du|^2_{AB} - \frac{3}{8} |D^2 u|^2. \label{eq:DuFDDuPart2}
\end{align}
Taking the real part of~\eqref{eq:DuFDDu} and using
equations~\eqref{eq:DuFDDuPart1} and~\eqref{eq:DuFDDuPart2} we obtain
the relation
\begin{equation} \label{eq:DuProductPart3}
	\Re (\mathit{III}) \geq -\frac{1}{2} \Re(Du,F\,Du) - \frac{1}{4} |Du|^2_{A+B} - \frac{1}{4} |Dv|^2_{A+B} - \frac{2}{3} |Du|^2_{AB} - \frac{3}{8} |D^2 u|^2,
\end{equation}
which, as shown below, can be combined with the estimates for $(I)$,
$\mathit{(II)}$, and $\mathit{(IV)}$ to produce an overall estimate
that consists solely of non-negative and telescoping terms---as
desired.

\paragraph{$(\mathit{IV})$:}

In view of~\eqref{eq:Seminorm} we see that $(\mathit{IV})$ coincides with
the $P$-seminorm of $Du$ with $P = AB$,
\begin{equation} \label{eq:DuProductPart4}
  \Re (\mathit{IV}) = (\mathit{IV}) = \frac 23 |Du|^2_{AB}.
\end{equation}
This term is non-negative and it therefore does not require further
treatment.

To complete the proof of the lemma we take real parts in
equation~\eqref{eq:DuParabolicProduct} and we
substitute~\eqref{eq:DuProductPart1}, \eqref{eq:DuProductPart2},
\eqref{eq:DuProductPart3} and \eqref{eq:DuProductPart4}; the result is
\begin{align}
  0 \geq& |Du|^2 + \frac{1}{4} D|Du|^2 - \frac 18 |D^2u|^2 + \frac{1}{2} D|u|_L^2 + \frac{1}{4} |Du|_{A+B}^2 - \frac{1}{4} |Dv|^2_{A+B}\nonumber \\
       =& |Du|^2 - \frac 18 |D^2u|^2 + D\left( \frac{1}{4} |Du|^2 + \frac{1}{2}|u|_L^2 + \frac{1}{4} |Du|_{A+B}^2 \right). \label{eq:DesiredBound}
\end{align}
The first two terms on the right-hand-side can be bounded by expanding
$|D^2u|^2$ and using Cauchy-Schwarz and Young's inequalities to obtain
\begin{align}
  |Du|^2 - \frac 18 |D^2u|^2 &= |Du|^2 - \frac 18 |Du - Dv|^2 \\
                             &= |Du|^2 - \frac 18 (|Du|^2 + |Dv|^2) + \frac 14 \Re (Du,Dv) \\
                             &\geq |Du|^2 - \frac 18 (|Du|^2 + |Dv|^2) - \frac 14 |Du||Dv| \\
                             &\geq |Du|^2 - \frac 14 (|Du|^2 + |Dv|^2) \\
                             &= \frac 12 |Du|^2 + \frac 14 D|Du|^2. 
\end{align}
Substituting this result into~\eqref{eq:DesiredBound}, we obtain
\begin{align}
     0 \geq& \frac 12 |Du|^2 + D\left( \frac{1}{2} |Du|^2 + \frac{1}{2}|u|_L^2 + \frac{1}{4} |Du|_{A+B}^2 \right), \label{eq:DesiredBound2}
\end{align}
which, as needed, is expressed as a sum of non-negative and telescoping
terms. Adding the time-series~\eqref{eq:DesiredBound2} from $m = 2$ to $n$
yields the desired equation~\eqref{eq:DuParabolicBound}, and the proof
is thus complete.
\end{proof}

\begin{remark} It is interesting to point out that
  Lemma~\ref{lem:DuParabolicBound} by itself implies a weak stability
  result that follows from equation~\eqref{eq:DuParabolicBound} and the Cauchy-Schwarz
  inequality:
\begin{align}
  |u^n| &= |u^1 + \sum_{m=2}^n (Du)^m| \nonumber\\
        &\leq |u^1| + \sum_{m=2}^n |(Du)^m| \nonumber\\
        &\leq |u^1| + \left( n\sum_{m=2}^n |(Du)^m|^2 \right)^{\frac{1}{2}}\nonumber \\
        &\leq |u^1| + \sqrt{nM_2},\label{eq:WeakerBound}
\end{align}
Theorem~\ref{thm:ParabolicStability} provides a much tighter energy
estimate than~\eqref{eq:WeakerBound}, of course.
\end{remark}

\subsection{Non-periodic (Legendre based) BDF2-ADI stability:
  parabolic equation\label{sec:LegendreParabolic}}

The stability result provided in the previous section for the
parabolic equation with periodic boundary conditions can easily be
extended to a non-periodic setting using a spatial approximation based
on use of Legendre polynomials. Here we provide the main
necessary elements to produce the extensions of the proofs. Background
on the polynomial collocation methods may be found
e.g. in~\cite{kopriva_implementing_2009}.

Under Legendre collocation we discretize the domain $\Omega =
[-1,1]\times[-1,1]$ by means of the $N+1$ Legendre Gauss-Lobatto
quadrature nodes $x_j=y_j$ ($j=0,\dots, N$) in each one of the
coordinate directions, which defines the grid $\{(x_j,y_k):0\leq j,k
\leq N\}$ (with $x_0=y_0=-1$ and $x_N=y_N=1$).  For {\em real-valued}
grid functions $f=\{f_{jk}\}$ and $g=\{g_{jk}\}$ we use the inner product
\begin{equation}\label{eq:LegendreInnerProduct}
  (f,g) = \sum_{j=0}^N \sum_{k=0}^N w_j w_k f_{jk} g_{jk},
\end{equation}
where $w_\ell$ ($0\leq \ell\leq N$) are the Legendre Gauss-Lobatto
quadrature weights. The interpolant $f_N$ of a grid function $f$ is a
linear combination of the form
$$
  f_N(x,y) = \sum_{j=0}^N\sum_{k=0}^N \widehat f_{jk} P_j(x)P_k(y).
$$
of Legendre polynomials $P_j$, where $\widehat f_{jk}$ are the Legendre
coefficients of $f_N$.

A certain exactness relation similar to the one we used in the Fourier
case exists in the Legendre context as well. Namely, {\em for grid
  functions $f$ and $g$ for which the product of the interpolants has
  polynomial degree $\leq 2N-1$ in the $x$ (resp. $y$) variable}, the
$j$ (resp. $k$) summation in the inner
product~\eqref{eq:LegendreInnerProduct} of the two grid functions is
equal to the integral of the product of their corresponding polynomial
interpolants with respect to $x$
(resp. $y$)~\cite[Sec.~5.2.1]{hesthaven_spectral_2007}---i.e.
\begin{subequations} \label{eq:ExactLegendre}
\begin{gather}
  (f,g) = \sum_{k=0}^N \int_{-1}^1 f_N(x,y_k)g_N(x,y_k)\,dx, \label{eq:ExactLegendreX} \\
  \text{provided} \nonumber \\
  \deg( f_N(x,y_k) g_N(x,y_k) )  \leq 2N-1 \text{ for all } 0\leq k \leq N, \nonumber \\
  \text{and} \nonumber \\
  (f,g) = \sum_{j=0}^N \int_{-1}^1 f_N(x_j,y)g_N(x_j,y)\,dy, \label{eq:ExactLegendreY} \\
  \text{provided} \nonumber \\
  \deg( f_N(x_j,y) g_N(x_j,y) )  \leq 2N-1 \text{ for all } 0\leq j \leq N, \nonumber
\end{gather}
\end{subequations}
Thus, for example, defining the Legendre $x$-derivative operator
$\delta_x$ as the derivative of the Legendre interpolant
(cf.~\eqref{eq:DerivativeExact}) (with similar definitions for
$\delta_y$, $\delta_{xx}$, $\delta_{yy}$ etc.), the exactness
relation~\eqref{eq:ExactLegendreX} holds whenever one or both of the
grid functions $f$ and $g$ is a Legendre $x$-derivative of a grid
function $h=\{h_{jk}\}$, $0\leq j,k\leq N$.

A stability proof for the parabolic equation with zero Dirichlet
boundary conditions,
$$
  U_t = \alpha\, U_{xx} + \beta\, U_{yy} + \gamma\, U_{xy} \text{ in } \Omega,\quad U=0 \text{ on } \partial \Omega,
$$
can now be obtained by reviewing and modifying slightly the strategy
presented for the periodic case in Section~\ref{sec:Parabolic}. Indeed,
the latter proof relies on the following properties of the spatial
differentiation operators: 
\begin{enumerate}

\item The discrete first and second derivative operators are skew-Hermitian and
Hermitian, respectively.

\item The operators $A$, $B$, $L$ and $AB$ defined in
  Section~\ref{sec:Parabolic} are positive semi-definite.
\end{enumerate}
Both of these results were established using the exactness relation
between the discrete and integral inner products together with
vanishing boundary terms arising from integration by parts---which
also hold in the present case since the exactness
relations~\eqref{eq:ExactLegendre} are only required in the proof to
convert inner products involving derivatives, so that the degree of
polynomial interpolants will satisfy the requirements of the
relations~\eqref{eq:ExactLegendre}. Since all other aspects of the
proofs in Section~\ref{sec:Parabolic} are independent of the
particular spatial discretization or boundary conditions used, we have
the following theorem:

\begin{theorem} 
  The stability estimate given in Theorem~\ref{thm:ParabolicStability}
  also holds on the domain $[-1,1]\times[-1,1]$ with homogeneous
  boundary conditions using the Legendre Gauss-Lobatto collocation
  method, where the inner products and norms are taken to be the
  Legendre versions.
\end{theorem}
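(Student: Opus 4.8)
The plan is to exploit the observation, already recorded in the discussion preceding the statement, that the entire argument of Theorem~\ref{thm:ParabolicStability} (and of the auxiliary Lemma~\ref{lem:DuParabolicBound} on which it rests) uses the spatial discretization only through two structural facts: that the first- and second-derivative operators are skew-Hermitian and Hermitian, respectively, and that the operators $A$, $B$, $L$ and $AB$ are positive semidefinite. Every subsequent manipulation---the finite-difference product rules \eqref{eq:FDProductRule} and \eqref{eq:ProductRule}, the Cauchy--Schwarz and Young estimates, the telescoping, and the parabolicity bound \eqref{eq:FtoAB}---is purely algebraic in the abstract operators and in the temporal operators $D$, $D^2$, $\widehat D$ and ``$\sim$'', and is therefore insensitive to the choice of spatial approximation. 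Consequently the proof reduces to re-deriving these two structural facts in the Legendre Gauss--Lobatto setting; once they are in hand, the proofs of Lemma~\ref{lem:DuParabolicBound} and Theorem~\ref{thm:ParabolicStability} may be repeated word for word, now with $(\cdot,\cdot)$ and $|\cdot|$ denoting the Legendre inner product \eqref{eq:LegendreInnerProduct} and its induced norm.

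To establish the skew-Hermitian and Hermitian properties I would proceed exactly as in the periodic case but with the periodicity-based exactness relation \eqref{eq:InnerProductToIntegral} replaced by its Legendre counterpart \eqref{eq:ExactLegendre}. For instance, to show $(\delta_x f, g) = -(f, \delta_x g)$ one applies the $x$-exactness relation \eqref{eq:ExactLegendreX} to pass from the discrete inner product to $\sum_k \int_{-1}^1 (f_N)_x(x,y_k)\, g_N(x,y_k)\,dx$, integrates by parts in $x$, and applies \eqref{eq:ExactLegendreX} once more in the reverse direction. The two integral-to-sum conversions are legitimate because a derivative lowers the polynomial degree by at least one, so each integrand has $x$-degree at most $2N-1$, exactly as required by \eqref{eq:ExactLegendre}; the Hermitian property of $\delta_{xx}$ and $\delta_{yy}$ follows analogously after two integrations by parts, and the $y$-relations follow from \eqref{eq:ExactLegendreY}. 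The boundary contributions produced by the integrations by parts vanish because every grid function to which these identities are applied in the two proofs---namely $u$, its temporal differences $Du$ and $D^2u$, the shifts $v$ and $\widetilde w$, and their spatial derivatives---is a linear combination of solution values, each of which satisfies the homogeneous Dirichlet condition $U=0$ on $\partial\Omega$ and therefore vanishes at the Gauss--Lobatto endpoints $x_0=-1,\ x_N=1$ and $y_0=-1,\ y_N=1$.

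The positive semidefiniteness of $A$ and $B$ then follows by a single integration by parts, that of $L$ by the same completion-of-the-square argument that underlies \eqref{eq:IntegralParabolicRelation} and \eqref{eq:PosDef} under the parabolicity hypotheses $\alpha,\beta>0$ and $\gamma^2\le 4\alpha\beta$, and that of $AB$ from the identity $(g,ABg)=\alpha\beta(\Delta t)^2|\delta_x\delta_y g|^2$, which also re-supplies the crucial bound \eqref{eq:FtoAB}. I expect the only delicate points to be bookkeeping ones: one must check, for each invocation of exactness, that at least one factor carries a derivative so that the degree-$\le 2N-1$ hypothesis of \eqref{eq:ExactLegendre} is met, and one must verify that the mixed term $F=-\gamma\,\Delta t\,\delta_x\delta_y$ can still be handled by the refined skew-Hermitian manipulation \eqref{eq:DuFDv}, in which $\delta_x$ and $\delta_y$ are moved onto separate factors. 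Both reduce to the same two ingredients---degree lowering under differentiation and the vanishing of boundary terms for boundary-vanishing grid functions---so no genuinely new estimate is required, and the theorem follows.
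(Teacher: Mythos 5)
Your proposal is correct and follows essentially the same route as the paper: the paper likewise observes that the periodic proof uses the spatial discretization only through the skew-Hermitian/Hermitian character of the derivative operators and the positive semidefiniteness of $A$, $B$, $L$ and $AB$, and re-establishes these in the Legendre setting via the exactness relations~\eqref{eq:ExactLegendre} (applicable because every invocation involves at least one differentiated factor, keeping degrees $\leq 2N-1$) together with boundary terms that vanish under the homogeneous Dirichlet conditions. Your added bookkeeping on degree-lowering and on the mixed-derivative term $F$ is consistent with, and slightly more explicit than, the paper's own argument.
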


\begin{remark}
  Unfortunately, the stability proof cannot be extended directly to
  the non-periodic hyperbolic case. Indeed, since in this case only
  one boundary condition is specified in each spatial direction, not
  all boundary terms arising from integration by parts vanish---and,
  hence, the first derivative operators are generally not
  skew-Hermitian. See~\cite{gottlieb_spectral_2001} for a discussion
  of spectral method stability proofs for hyperbolic problems.
\end{remark}

\section{Quasi-unconditional stability for higher-order non-ADI BDF 
  methods: periodic advection-diffusion equation\label{sec:HighOrderBDFStability}}

\subsection{Rectangular window of stability\label{sec:RectangularWindow}}

This paper does not present stability proofs for the BDF-ADI methods
of order higher than two. In order to provide some additional insights
into the stability properties arising from the BDF strategy in the
context of time-domain PDE solvers, this section investigates the
stability of the BDF schemes of order $s\geq
2$---cf. Remark~\ref{rem:vonNeumann} as well as the last paragraph in
Section~\ref{sec:StabilityDiscussion}---under periodic boundary
conditions and Fourier discretizations. Because of Dahlquist's second
barrier~\cite[p. 243]{lambert_numerical_1991} the $s\geq 3$ schemes
cannot be unconditionally stable for general (even linear)
PDEs. However, we will rigorously establish that the BDF methods of
order $s$ with $2\leq s \leq 6$ are \emph{quasi-unconditionally
  stable} for the advection-diffusion equation---in the sense of
Definition~\ref{def:QuasiUnconditional}. (As shown in
Section~\ref{sec:BDF2Stability} further, the $s=2$ algorithms are
indeed unconditionally stable, at least for certain linear PDE.)

To introduce the main ideas in our quasi-unconditional stability
analysis for BDF-based schemes we consider first a Fourier-BDF
scheme for the advection-diffusion equation in one spatial dimension
with periodic boundary conditions:
\begin{align} \label{eq:AdvectionDiffusion}
	& U_t + \alpha U_x = \beta U_{xx}, \quad x \in \mathbb{R},\quad t \geq 0, \\
	& U(x,0) = f(x), \quad U(x,t) = U(x+2\pi,t), \nonumber
\end{align}
where $\beta > 0$. Using the $N$-point Fourier discretization
described in Sections~\ref{sec:PrelimDefs} and~\ref{sec:DiscreteOperators}, the
resulting semi-discrete equation is given by
\begin{equation} \label{eq:DiscreteAdvectionDiffusion}
  \frac{\partial}{\partial t} u = (-\alpha\,\delta_x +
  \beta\,\delta_x^2)\,u.
\end{equation}
As mentioned in Remark~\ref{rem:vonNeumann}, the von Neumann criterion
provides a necessary and sufficient stability condition for this
problem: the order-$s$ scheme is stable if and only if the (complex!)
eigenvalues of the spatial operator in the semi-discrete
system~\eqref{eq:DiscreteAdvectionDiffusion} multiplied by $\Delta t$
lie within the region $R_s$ of absolute stability of the BDF method of
order $s$. We will see that, for the present one-dimensional
advection-diffusion problem, these eigenvalues lie on a parabola
which does not vary with $N$. As is known~\cite[Sec. 7.6.1 and
p. 174]{leveque_finite_2007}, further, the boundary of $R_s$ is given
by the polar parametrization
\begin{equation}\label{eq:BDFStabBoundary}
  z(\theta) =  \frac{1}{b} \left( 1 - \sum_{j=0}^{s-1} a_j e^{-i(j+1)\theta}  \right).
\end{equation}
To establish the quasi-unconditional stability
(Definition~\ref{def:QuasiUnconditional}) of the Fourier-BDF scheme
under consideration it is therefore necessary and sufficient to show
that a certain family of ``complete parabolas'' lie in the stability
region of the BDF scheme for $\Delta t < M_t$ and $\Delta x < M_h$ for
some constants $M_t$ and $M_h$ which define the corresponding
rectangular window of stability. As discussed in the next section,
further, certain CFL-like stability constraints that hold outside of
the rectangular window of stability are obtained by consideration of
the relative position of eigenvalues on such parabolas and the BDF
stability region. The former property (quasi-unconditional stability)
follows from an application of Lemma~\ref{lem:Parabola}, which
establishes that the stability regions of the BDF schemes contain the
required families of parabolas.

\begin{figure}[!htb]
	\centering
	\includegraphics[width=0.5\textwidth]{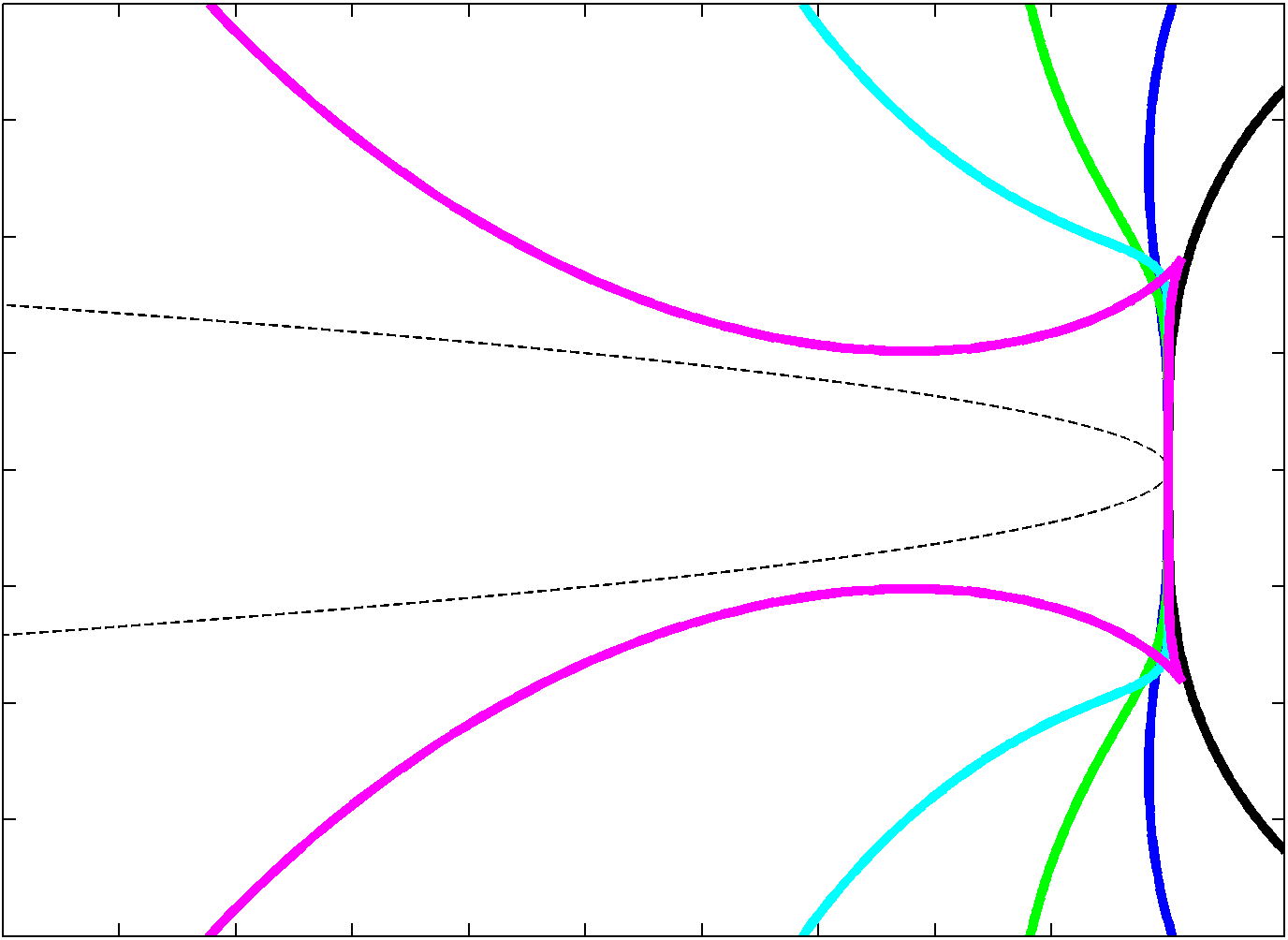}
        \caption{Boundaries of the stability regions of the BDF
          methods of order $s=2$ (black), 3 (blue), 4 (green), 5
          (cyan) and 6 (magenta), together with a parabola $\Gamma_m$
          (dashed black), $m=0.1$, that is contained in every
          stability region.}
	\label{fig:Parabola}
\end{figure}

\begin{lemma} \label{lem:Parabola} Let $\Gamma_m$ denote the locus of
  the left-facing parabola of equation $x = -\frac{1}{m} y^2$
  ($m>0$). Then, for each $2\leq s \leq 6$, there exists a critical
  $m$-value $m_C$ such that $\Gamma_m$ lies in the stability region
  $R_s$ of the BDF method of order $s$ for all $0\leq m<m_C$.
\end{lemma}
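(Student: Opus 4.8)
The plan is to reduce the two-dimensional containment statement to a one-dimensional study of the boundary locus \eqref{eq:BDFStabBoundary}. Writing $z(\theta) = x(\theta) + i\,y(\theta)$ with $x(\theta) = \Re z(\theta)$ and $y(\theta) = \Im z(\theta)$, I first observe that a point $x+iy$ with $y\neq 0$ lies on $\Gamma_m$ if and only if $m = -y^2/x$, which forces $x<0$. Hence, away from the origin, the parabola $\Gamma_m$ meets the boundary of $R_s$ precisely when $m$ is attained by the function
\[
  g(\theta) = -\frac{y(\theta)^2}{x(\theta)}, \qquad m_C = \inf_{\theta\in\Theta_-} g(\theta),
\]
where $\Theta_- = \{\theta : x(\theta)<0\}$ indexes the portion of the locus that protrudes into the left half-plane. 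The whole argument then rests on two claims: that $\Gamma_m$ avoids the locus (except at the origin) for $0<m<m_C$, and that $m_C>0$. For $s=2$ the scheme is A-stable, $\Theta_-=\emptyset$, and every left-opening parabola lies in $R_s$ trivially; so the substance is in $3\le s\le 6$.

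Granting $m_C>0$, the containment follows by a connectivity argument. For $0<m<m_C$ the parabola meets the locus only at $z=0$, so each of the two arcs of $\Gamma_m\setminus\{0\}$ (the branches $y>0$ and $y<0$) is connected and disjoint from the locus, hence lies entirely in a single connected component of the complement of the locus. Along each arc, as $|y|\to\infty$ one has $x = -y^2/m \to -\infty$ with $\arg(-z)\to 0$, so the far end of the arc enters the $A(\alpha)$-stability wedge about the negative real axis and therefore lies in $R_s$; by connectedness the entire arc lies in $R_s$. Since $z=0$ is a simple root on the unit circle it belongs to $R_s$ as well, so $\Gamma_m\subset R_s$. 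The degenerate case $m=0$ corresponds to the negative real axis, which is contained in $R_s$.

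It therefore remains to prove $m_C>0$, and this is where the main difficulty lies. Because $z(\theta)$ is a trigonometric polynomial, $x(\theta)$ has finitely many zeros and $\Theta_-$ is a finite union of open intervals, on each of which $g$ is continuous and strictly positive. At an endpoint where $x(\theta)\to 0^-$ with $y(\theta)\neq 0$ one has $g(\theta)\to +\infty$ immediately. The delicate endpoint is $\theta=0$, the attachment point of the left-half-plane lobe, where $x,y\to 0$ \emph{and} both the parabola and the locus are tangent to the imaginary axis; a crude estimate cannot separate them there. The resolution is that order-$s$ consistency forces the locus to peel away from the imaginary axis only very slowly: using the closed form $z(\theta)=\sum_{k=1}^{s}\tfrac1k\,(1-e^{-i\theta})^k = i\theta - \sum_{k>s}\tfrac1k\,(1-e^{-i\theta})^k$, one gets $y(\theta)\sim\theta$ while $x(\theta)=O(\theta^{s+1})$, so that $g(\theta)=-y(\theta)^2/x(\theta)\sim c\,\theta^{1-s}\to+\infty$ because $s\ge 3$. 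The parabola, by contrast, deviates from the imaginary axis at order $\theta^2$, so it clears the lobe near its tip for every $m$. Thus $g\to+\infty$ at every endpoint of every interval comprising $\Theta_-$, whence $g$ attains a positive minimum on each interval and $m_C=\inf_{\Theta_-} g>0$, completing the proof.

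I expect the order-$s$ cancellation at the origin to be the crux: it is precisely the consistency of the BDF method (the matching $z(\theta)=i\theta+O(\theta^{s+1})$) that makes the stability boundary hug the imaginary axis more tightly than the parabola does, which is the mechanism underlying quasi-unconditional stability in the sense of Definition~\ref{def:QuasiUnconditional}. The remaining ingredients---finiteness of the zero set of $x(\theta)$, the blow-up of $g$ at the other imaginary-axis crossings, and the far-field wedge membership---are routine.
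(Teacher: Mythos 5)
Your proof is correct, but it is substantially more detailed than the paper's, and it proves precisely the point the paper leaves to intuition. The paper's own argument is essentially definitional: it sets $m_C$ as the infimum in equation~\eqref{eq:mC}, observes that containment of $\Gamma_m$ in $R_s$ for $0\le m<m_C$ is then immediate, and disposes of the one substantive claim---$m_C>0$---by appeal to ``simple geometrical considerations,'' a citation to~\cite[Sec.~2.4]{cubillos_general-domain_2015}, and inspection of Figure~\ref{fig:Parabola}. You instead supply an analytic mechanism for that positivity: the consistency identity $z(\theta)=i\theta-\sum_{k>s}\tfrac1k(1-e^{-i\theta})^k$ forces $y(\theta)\sim\theta$ while $|x(\theta)|=O(\theta^{s+1})$, so $g(\theta)=-y(\theta)^2/x(\theta)\gtrsim\theta^{1-s}\to+\infty$ at the attachment point, while $g\to+\infty$ trivially at the other endpoints of $\Theta_-$; your connectivity argument (roots cross the unit circle only on the boundary locus, so each component of the complement is uniformly stable or unstable, and the far end of each parabola branch lies in the $A(\alpha)$ wedge) then upgrades ``avoids the locus'' to ``lies in $R_s$.'' This buys a self-contained rigorous proof and, as you note, makes visible that it is exactly the order-$s$ tangency $z(\theta)=i\theta+O(\theta^{s+1})$---versus the parabola's order-$2$ deviation---that underlies quasi-unconditional stability; the paper's version buys brevity plus the quantitative values of $m_C$ in Table~\ref{table:ConstantM_C}, which are computed numerically in any case. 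Two small points you gloss, both needed and both true: strict positivity of $g$ on $\Theta_-$ requires that the locus never touch the negative real axis (otherwise $y=0$ with $x<0$ would give $g=0$ and $m_C=0$), which follows since the open $A(\alpha)$ wedge, $\alpha>0$, lies in the interior of $R_s$ while locus points cannot be interior; and your claim that $\theta=0$ is the only point where $x,y\to0$ simultaneously uses the fact that for $s\le6$ the extraneous roots of the BDF polynomial $\rho$ are strictly inside the unit disk, so $z(\theta)=0$ only at $\theta=0$. Neither is a gap in substance, but a complete write-up should record them.
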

\begin{proof}
  Clearly, given $s=1,\dots,6$, $\Gamma_m$ is contained within $R_s$
  (see Figure~\ref{fig:Parabola}) if and only if $m$ is larger than
  the infimum
\begin{equation} \label{eq:mC} 
m_C = \inf \left\lbrace m \,|\, m =
    -\frac{y^2}{x}\, \mbox{for some}\, (x+iy) \in R_s\right\rbrace.
\end{equation}
Simple geometrical considerations, further, show that $m_C$ is a
positive number (details in these regards can be found
in~\cite[Sec.~2.4]{cubillos_general-domain_2015}; consideration
of Figure~\ref{fig:Parabola}, however, clearly indicates that $m_C>0$). Thus, the
parabola $\Gamma_m$ is contained within $R_s$ if and only if $0\leq
m<m_C$, and the proof is complete.
\end{proof}

Numerical values of $m_C$ for each BDF method of orders 3 through 6
(which were computed as the infimum of $-\frac{y^2}{x}$ over the
boundary of $R_s$), are presented in Table~\ref{table:ConstantM_C}.
\begin{table}[!htb]
\centering
\begin{tabular}{c|cccc}
\hline \hline
 $s$ & 3 & 4 & 5 & 6 \\ \hline
 $m_C$ & 14.0 & 5.12 & 1.93 & 0.191 \\ \hline \hline
\end{tabular}
\caption{Numerical values of the constant $m_C$ obtained via  numerical evaluation of equation~\eqref{eq:mC}.  For all $m<m_C$
  the parabola $\Gamma_{m}$ described in Lemma~\ref{lem:Parabola} is contained in
  the region of absolute stability of the BDF method of order $s$. By
  Theorem~\ref{thm:AdvectionDiffusion1D}, the order-$s$ BDF method applied to the
  advection-diffusion equation $u_t + \alpha \, u_x = \beta \,u_{xx}$ with
  Fourier collocation is stable for all $\Delta t < \frac{\beta}{\alpha^2}m_C$.}
\label{table:ConstantM_C}
\end{table}

\begin{theorem} \label{thm:AdvectionDiffusion1D} Let $2 \leq s \leq
  6$. The solution produced for the
  problem~\eqref{eq:AdvectionDiffusion} by the $s$-th order
  Fourier-based BDF scheme described in this section is
  quasi-unconditionally stable
  (Definition~\ref{def:QuasiUnconditional}) with $M_h = \infty$ and
  $M_t = \frac{\beta}{\alpha^2}m_C$ ($M_t=\infty$ for $\alpha = 0$),
  where the $s$ dependent constant $m_C$ is given in
  equation~\eqref{eq:mC}.
\end{theorem}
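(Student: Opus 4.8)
The plan is to reduce the stability question entirely to the geometric statement already encoded in Lemma~\ref{lem:Parabola}, using the von Neumann criterion. First I would compute the spectrum of the semi-discrete spatial operator in~\eqref{eq:DiscreteAdvectionDiffusion}. Since the Fourier collocation derivative $\delta_x$ acts on the $k$-th trigonometric mode $e^{ikx}$ by multiplication by $ik$ (for $|k|\leq N/2$), the operator $-\alpha\,\delta_x+\beta\,\delta_x^2$ is diagonal in the Fourier basis with eigenvalues
\[ \lambda_k = -i\alpha k - \beta k^2, \qquad |k|\leq \tfrac{N}{2}. \]
By Remark~\ref{rem:vonNeumann} all operators here are normal, so the von Neumann criterion is necessary \emph{and} sufficient: the scheme is stable if and only if $z_k = \Delta t\,\lambda_k \in R_s$ for every such $k$.

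The key algebraic step is to show that these eigenvalues lie on one of the parabolas of Lemma~\ref{lem:Parabola}. Writing $x=\Re z_k = -\beta\Delta t\,k^2$ and $y = \Im z_k = -\alpha\Delta t\,k$ and eliminating $k$ yields $x = -\frac{\beta}{\alpha^2\Delta t}\,y^2$; that is, every $z_k$ lies on the parabola $\Gamma_m$ of Lemma~\ref{lem:Parabola} with $m = \frac{\alpha^2\Delta t}{\beta}$. Crucially, this curve depends only on the combination $m$ and not on $N$, so refining the spatial mesh merely deposits additional eigenvalues along the very same parabola.

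I would then invoke Lemma~\ref{lem:Parabola} directly. Since $\Gamma_m\subset R_s$ precisely when $m<m_C$, and since the finitely many eigenvalues $z_k$ all lie on $\Gamma_m$, the inclusion $z_k\in R_s$ holds for every $k$ and every $N$ as soon as $m=\frac{\alpha^2\Delta t}{\beta}<m_C$, i.e. $\Delta t<\frac{\beta}{\alpha^2}m_C=:M_t$. Because this bound is independent of $N$ (equivalently of $\Delta x$), the scheme is stable for arbitrarily fine spatial discretizations, giving $M_h=\infty$ and establishing quasi-unconditional stability in the sense of Definition~\ref{def:QuasiUnconditional}. For the degenerate case $\alpha=0$ the eigenvalues $z_k=-\beta\Delta t\,k^2$ are real and non-positive; since the stability region $R_s$ of each BDF method of order $2\leq s\leq 6$ contains the entire negative real axis, they lie in $R_s$ for every $\Delta t>0$, so $M_t=\infty$.

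The point requiring care, rather than a genuine obstacle, is the bookkeeping that converts the continuous inclusion of Lemma~\ref{lem:Parabola} into a statement about the discrete spectrum: one must note that the strict inequality $m<m_C$ places $\Gamma_m$, and hence each $z_k$, in the \emph{interior} of $R_s$, so the roots of the BDF characteristic polynomial remain strictly inside the unit disk and power-boundedness---hence the estimate~\eqref{eq:Stable}---follows uniformly in $N$ from normality. All the genuinely difficult geometry has already been dispatched in the proof that $m_C>0$.
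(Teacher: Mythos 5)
Your proposal is correct and follows essentially the same route as the paper's own proof: Fourier diagonalization yielding the eigenvalues $\lambda_k=-i\alpha k-\beta k^2$, the observation that the scaled eigenvalues $z_k=\Delta t\,\lambda_k$ lie on the $N$-independent parabola $\Gamma_m$ with $m=\alpha^2\Delta t/\beta$, an appeal to Lemma~\ref{lem:Parabola} and the normality argument of Remark~\ref{rem:vonNeumann}, and the A(0)-stability treatment of the case $\alpha=0$. Your closing remark on strict interiority and uniform power-boundedness is a careful touch the paper leaves implicit, but it does not change the argument.
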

\begin{proof}
Applying the discrete Fourier transform, 
$$ \widehat{u}_k = \frac{1}{N+1} \sum_{j=0}^N u_j e^{-i x_j k}, \quad -\frac{N}{2} \leq k \leq \frac{N}{2} $$
to equation~\eqref{eq:DiscreteAdvectionDiffusion} we obtain the set of ODEs
\begin{equation} \label{eq:FourierAdvectionDiffusion}
	\frac{\partial}{\partial t} \widehat{u}_k = -(i\alpha k + \beta k^2)\,\widehat{u}_k
\end{equation}
for the Fourier coefficients $\widehat{u}_k$. It is clear from this transformed
equation that the eigenvalues of the spatial operator for the semi-discrete
system are given by
\begin{equation} \label{eq:EigenvaluesAdvectionDiffusion}
	\lambda(k) = -(i\alpha k + \beta k^2).
\end{equation}
In view of Remark~\ref{rem:vonNeumann}, to complete the proof it
suffices to show that these eigenvalues multiplied by $\Delta t$ lie
in the stability region of the BDF method for all $\Delta t <
\frac{\beta}{\alpha^2}m_C$.

Let $z_k = \lambda(k) \, \Delta t$. If $\alpha=0$, then $z_k$ is a
non-positive real number for all integers $k$. In view of the
A(0)-stability of the BDF methods, we immediately see that the methods
are unconditionally stable in this case. For the case $\alpha \neq 0$,
in turn, we have
\begin{align}
  z_k &= \lambda(k)\Delta t = - \beta\,\Delta t\,k^2-i\alpha\,\Delta t\,k  \nonumber\\
  &= -\frac{\beta}{\alpha^2 \Delta t}(\alpha\,\Delta t\,k)^2 - i(\alpha\,\Delta t\,k) \nonumber\\
  &= -\frac{1}{\frac{\alpha^2 \Delta t}{\beta}} (\alpha\,\Delta
  t\,k)^2 - i(\alpha\,\Delta t\,k)\quad \mbox{with}\quad -\frac{N}{2}
  \leq k \leq \frac{N}{2}. \label{eq:LambdaDt}
\end{align}
From~\eqref{eq:LambdaDt} it is clear that, for all integers $k$, $z_k$
lies on the  left-facing parabola $\Gamma_m $ with 
\begin{equation}\label{eq:MDefAdvDiff}
m =
\frac{\alpha^2\Delta t}{\beta}.
\end{equation}
But, by Lemma~\ref{lem:Parabola} we know that the parabola $\Gamma_m $
lies within the stability region $R_s$ for all $m<m_C$, and, thus, for
all
$$
 \Delta t < \frac{\beta}{\alpha^2}m_C.
$$
Furthermore, the above condition holds for all $N$ and all $k$ with
$-N/2\leq k\leq N/2$. It follows that $M_h=\infty$ and the proof is
complete.
\end{proof}

We now establish the quasi-unconditional stability of the
Fourier-based BDF methods for the advection-diffusion equation
\begin{equation} \label{eq:AdvectionDiffusion2D}
  u_t + \boldsymbol{\alpha} \cdot \nabla u = \beta \Delta u, \quad \text{ in } [0,2\pi]^d,\ d=2,3
\end{equation}
in two- and three-dimensional space and with periodic boundary
conditions, where $\boldsymbol{\alpha}=(\alpha_1,\alpha_2)^{\mathrm
  T}$ and $\boldsymbol{\alpha}=(\alpha_1,\alpha_2,\alpha_3)^{\mathrm
  T}$ for $d=2$ and 3 respectively.  Thus, letting $\mathbf x = (x,y)^{\mathrm
T}$, $\mathbf k = (k_x,k_y)^{\mathrm T}$, $\mathbf N = (N_x,N_y)^{\mathrm T}$
(resp. $\mathbf x = (x,y,z)^{\mathrm T}$, $\mathbf k = (k_x,k_y,k_z)^{\mathrm T}$,
$\mathbf N = (N_x,N_y,N_z)^{\mathrm T}$) in $d=2$ (resp. $d=3$) spatial dimensions,
and substituting the Fourier series (using multi-index notation)
$$
  u(\mathbf x) = \sum_{\mathbf k = -\mathbf N/2}^{\mathbf N/2} \widehat u_{\mathbf k} e^{i(\mathbf k \cdot \mathbf x)}
$$
into equation~\eqref{eq:AdvectionDiffusion2D}, the Fourier-based BDF method of
order $s$ results as the $s$-order BDF method applied to the ODE system
\begin{equation}\label{eq:AdvDiffFourierSemiDiscrete} 
  \frac{\partial \widehat u_{\mathbf k}}{\partial t} = \left(-i(\boldsymbol\alpha \cdot \mathbf k) -\beta |\mathbf k|^2 \right) \widehat u_{\mathbf k}.
\end{equation}
for the Fourier coefficients $\widehat u_{\mathbf k}$. (In order to
utilize a single mesh-size parameter $h$ and corresponding
quasi-unconditional stability constant $M_h$ while allowing for
different grid-fineness in the $x$, $y$ and $z$ directions, we utilize
positive integers $r_2$ and $r_3$ and discretize the domain on the
basis of $N_x+1$ points in the $x$ direction, $N_y+1 = r_2N_x+1$
points in the $y$ direction, and $N_z+1 = r_3N_x+1$ points in the $z$
direction ($N_x$ even). The mesh size parameter is then given by
$h=2\pi/(N_x+1)$.)

\begin{theorem} \label{thm:AdvectionDiffusion2D} The Fourier-based BDF
  scheme of order $s$ (not ADI!)  for the
  problem~\eqref{eq:AdvectionDiffusion2D} with $3 \leq s \leq 6$ is
  quasi-unconditionally stable with constants $M_t =
  \frac{|\boldsymbol{\alpha}|^2}{\beta}m_C$ and $M_h=\infty$.
\end{theorem}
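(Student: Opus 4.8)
The plan is to follow the one-dimensional argument of Theorem~\ref{thm:AdvectionDiffusion1D} verbatim in spirit, invoking the von Neumann criterion—which is here both necessary and sufficient, since in the Fourier setting all spatial operators are normal (Remark~\ref{rem:vonNeumann})—to reduce stability to the requirement that the eigenvalues $z_{\mathbf k}=\lambda(\mathbf k)\,\Delta t$ of the semi-discrete system~\eqref{eq:AdvDiffFourierSemiDiscrete} lie in $R_s$ for every wavenumber $\mathbf k$. Reading $\lambda(\mathbf k)=-i(\boldsymbol\alpha\cdot\mathbf k)-\beta|\mathbf k|^2$ off~\eqref{eq:AdvDiffFourierSemiDiscrete}, the eigenvalue $z_{\mathbf k}$ has real part $x_{\mathbf k}=-\beta|\mathbf k|^2\Delta t$ and imaginary part $y_{\mathbf k}=-(\boldsymbol\alpha\cdot\mathbf k)\Delta t$. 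First I would dispose of the case $\boldsymbol\alpha=0$, for which every $z_{\mathbf k}$ is a non-positive real and $A(0)$-stability yields unconditional stability; I would then treat $\boldsymbol\alpha\neq 0$ by locating the $z_{\mathbf k}$ relative to the left-facing parabolas $\Gamma_m$ of Lemma~\ref{lem:Parabola}.

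The single new ingredient relative to the one-dimensional proof is the Cauchy–Schwarz inequality $(\boldsymbol\alpha\cdot\mathbf k)^2\le|\boldsymbol\alpha|^2\,|\mathbf k|^2$, which holds identically for $d=2$ and $d=3$. Combining it with the expressions for $x_{\mathbf k}$ and $y_{\mathbf k}$ I would obtain
\begin{equation*}
  y_{\mathbf k}^2=(\boldsymbol\alpha\cdot\mathbf k)^2(\Delta t)^2\le|\boldsymbol\alpha|^2\,|\mathbf k|^2(\Delta t)^2=\frac{|\boldsymbol\alpha|^2\Delta t}{\beta}\bigl(-x_{\mathbf k}\bigr),
\end{equation*}
whence $x_{\mathbf k}\le-\tfrac{1}{m}\,y_{\mathbf k}^2$ with $m=|\boldsymbol\alpha|^2\Delta t/\beta$. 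Here lies the essential difference with the scalar case: in one dimension the eigenvalues lie \emph{exactly} on the parabola $\Gamma_m$, whereas now the inequality places each $z_{\mathbf k}$ only in the closed region to the left of $\Gamma_m$, i.e.\ in the (convex) interior of that parabola together with the curve itself; equality is attained precisely for those $\mathbf k$ parallel to $\boldsymbol\alpha$.

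Consequently the crux of the proof—and the step I expect to require the most care—is to strengthen Lemma~\ref{lem:Parabola} from a statement about the curve $\Gamma_m$ to one about its entire interior $\{x\le-\tfrac1m y^2\}$. I would settle this directly from the characterization~\eqref{eq:mC} of $m_C$: any point $x+iy$ of the instability region satisfies $-y^2/x\ge m_C>m$, hence $x>-\tfrac1m y^2$, so it lies strictly to the \emph{right} of $\Gamma_m$; therefore no instability point meets the interior of $\Gamma_m$, and the interior is contained in $R_s$ whenever $m<m_C$ (the bounding curve being handled by Lemma~\ref{lem:Parabola} itself, and the zero mode $z_{\mathbf 0}=0$ by the stability of the origin). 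Combining this with the eigenvalue bound shows that every $z_{\mathbf k}$ lies in $R_s$ as soon as $m=|\boldsymbol\alpha|^2\Delta t/\beta<m_C$, that is, for all $\Delta t<(\beta/|\boldsymbol\alpha|^2)\,m_C$, in exact analogy with the one-dimensional constant of Theorem~\ref{thm:AdvectionDiffusion1D}. Since this condition is uniform in $\mathbf k$ and involves neither $N_x$, $N_y$, $N_z$ nor the mesh parameter $h$, it holds for arbitrarily fine Fourier discretizations, giving $M_h=\infty$ and establishing quasi-unconditional stability in the sense of Definition~\ref{def:QuasiUnconditional}.
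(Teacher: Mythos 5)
Your proposal is correct and follows essentially the same route as the paper: the same eigenvalue computation from~\eqref{eq:AdvDiffFourierSemiDiscrete}, the same Cauchy--Schwarz bound~\eqref{eq:ImaginaryBound} confining the $z_{\mathbf k}$ to the closed region $\{x \le -\tfrac{1}{m}y^2\}$ with $m = |\boldsymbol\alpha|^2\Delta t/\beta$, and the same appeal to Lemma~\ref{lem:Parabola}---your explicit verification that the entire interior of $\Gamma_m$ (not just the curve) avoids the instability region, via the characterization~\eqref{eq:mC}, is exactly the ``simple argument similar to the one used in Theorem~\ref{thm:AdvectionDiffusion1D}'' that the paper leaves unstated. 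Note also that the threshold you derive, $\Delta t < \frac{\beta}{|\boldsymbol\alpha|^2}m_C$, is the correct one (consistent with the paper's own proof and with Theorem~\ref{thm:AdvectionDiffusion1D}), whereas the constant $M_t = \frac{|\boldsymbol\alpha|^2}{\beta}m_C$ printed in the theorem statement has the fraction inverted, evidently a typo.
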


\begin{proof}
We first note that the eigenvalues of the discrete spatial operator in
equation~\eqref{eq:AdvDiffFourierSemiDiscrete} multiplied by $\Delta t$ are given by
\begin{equation} \label{eq:Eigenvalues2D}
  z_{\mathbf k} = -i\Delta t\,\boldsymbol\alpha \cdot \mathbf k - \Delta t\,\beta|\mathbf k|^2.
\end{equation}
Clearly, in contrast with the situation encountered in the context of
the one dimensional problem considered in
Theorem~\ref{thm:AdvectionDiffusion1D}, in the present case the set of
$z_{\mathbf k}$ does not lie on a single parabola. But, to establish
quasi-unconditional stability it suffices to verify that this set is
bounded on the right by a certain left-facing parabola through the
origin. This can be accomplished easily: in view of the Cauchy-Schwarz
inequality, we have
\begin{equation} \label{eq:ImaginaryBound}
  |\boldsymbol\alpha \cdot \mathbf k| \leq |\boldsymbol\alpha||\mathbf k|.
\end{equation}
Therefore, letting $\xi = |\mathbf k|$, the eigenvalues multiplied by $\Delta
t$ are confined to the region
$$ 
  \{z\,:\,\Re z = -\Delta t \beta \xi^2,\; |\Im z| \leq \Delta t |\boldsymbol\alpha|\xi,\; \xi\geq 0\}.
$$
Clearly, the boundary of this region is the left-facing parabola
$$
  x = -\frac{\beta}{\Delta t|\boldsymbol\alpha|^2} y^2
$$
and the theorem now follows from an application of
Lemma~\ref{lem:Parabola} together with a simple argument similar to
the one used in Theorem~\ref{thm:AdvectionDiffusion1D}.
\end{proof}

\subsection{Order-$s$ BDF methods outside the rectangular window of stability\label{sec:StabilityBDFvsAB}}

\begin{figure}[!htb]
\centering
$\begin{array}{ccc}
	  \includegraphics[width=0.25\textwidth]{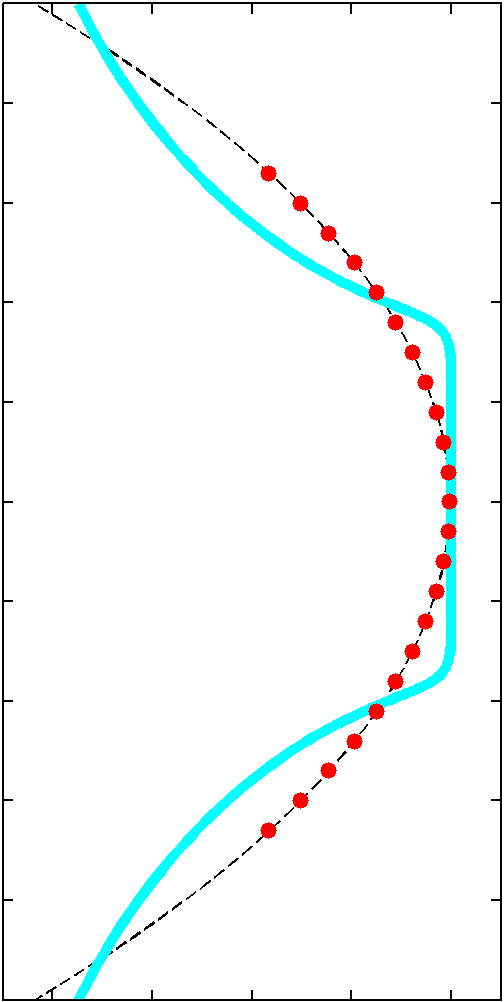} &
    \includegraphics[width=0.25\textwidth]{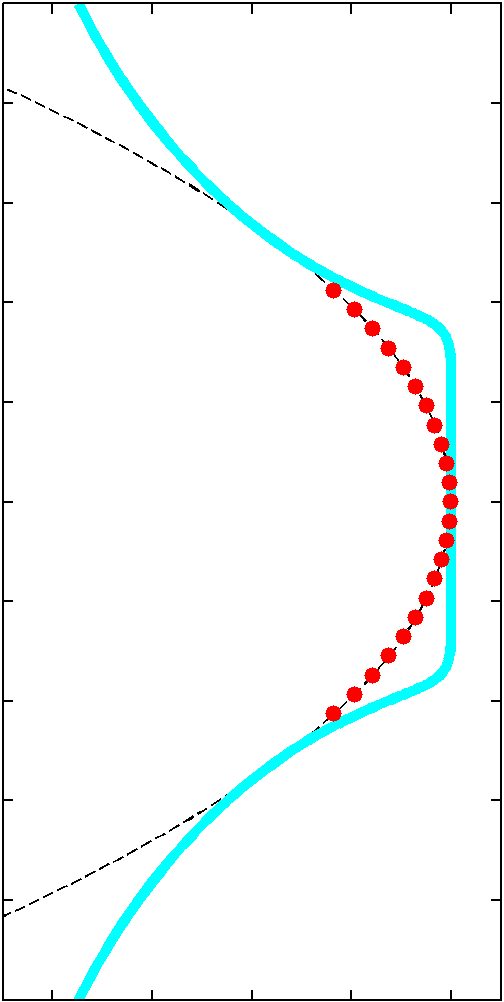} &
    \includegraphics[width=0.25\textwidth]{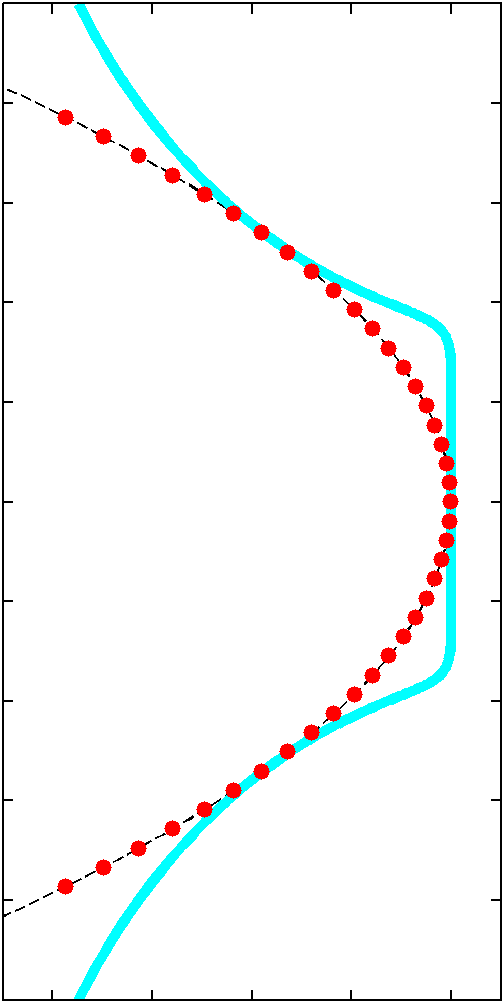} \\
    \text{(a)} & \text{(b)} & \text{(c)} 
\end{array}$
\caption{Variation of the eigenvalue distribution for the
  one-dimensional advection-diffusion equation with $\alpha=1.0$ and
  $\beta = 0.05$ (parameters selected for clarity of visualization),
  as $N$ and $\Delta t$ are varied. (Theoretical value: $M_t=0.0965$
  for this selection of physical parameters.) The eigenvalues
  associated with this problem multiplied by $\Delta t$ are plotted as
  red dots together with the corresponding parabola $\Gamma_m$ with $m
  = \alpha^2 \Delta t / \beta$ (dashed black line) and the boundary of
  the BDF5 stability region (cyan curve; cf.
  Figure~\ref{fig:Parabola}) for various values of $N$ and $\Delta
  t$. Figure~\ref{fig:AdvDiffCFL}(a): $N+1=19$ and $\Delta t = 0.15$;
  in this case some eigenvalues lie outside the stability region.
  Figure~\ref{fig:AdvDiffCFL}(b): The time-step is reduced to $\Delta
  t = M_t = 0.0965$, the parabola $\Gamma_m$ is tangent to the
  stability boundary and fully contained in stability region; in
  particular, all eigenvalues now lie within in the stability region.
  Figure~\ref{fig:AdvDiffCFL}(c) The number of grid points is
  increased to $N+1=41$ while maintaining stability: for this value of
  $\Delta t$ stability holds for all values of $N$.}
\label{fig:AdvDiffCFL}
\end{figure}

Theorems~\ref{thm:AdvectionDiffusion1D}
and~\ref{thm:AdvectionDiffusion2D} should not be viewed as a
suggestion that the $s$-th order BDF methods are not stable when the
constraint $\Delta t < M_t$ in the theorem is not satisfied. For
example, for $\Delta t > M_t$ the complete parabolas $\Gamma_m$
defined in Section~\ref{sec:RectangularWindow} intersect the region where
the BDF method is unstable---as demonstrated in
Figure~\ref{fig:AdvDiffCFL}. Fortunately, however, stability can
still be ensured for such values of $\Delta t$ provided sufficiently
large values of the spatial meshsizes are used.  For example, in the
one-dimensional case considered in
Theorem~\ref{thm:AdvectionDiffusion1D} we have $N + 1 = 2\pi/h$, and
the eigenvalues are given by equation~\eqref{eq:LambdaDt}: clearly
only a bounded segment in the parabola is actually relevant to the
stability of the ODE system that results for each fixed value of
$N$. In particular, we see that stability is ensured provided this
particular segment, and not necessarily the complete parabola
$\Gamma_m$, is contained in the stability region of the $s$-th order
BDF algorithm.

From equation~\eqref{eq:LambdaDt} we see that increases in the values
of $N$ lead to corresponding increases in the length of the parabolic
segment on which the eigenvalues actually lie, while decreasing
$\Delta t$ results in reductions of both the length of the relevant
parabolic segment as well as the width of the parabola
itself. Therefore, for $\Delta t > M_t$, increasing the number of grid
points eventually causes some eigenvalues to enter the region of
instability. But stability can be restored by a corresponding
reduction in $\Delta t$---see Figure~\ref{fig:AdvDiffCFL}. In other
words, a CFL-like condition of the form $\Delta t \leq F(h)$
($h=2\pi/(N+1)$) exists for $\Delta t > M_t$: the ``maximum stable
$\Delta t$'' function $F(h)$ can be obtained by considering the
intersection of the boundary locus of the BDF stability region
(equation~\eqref{eq:BDFStabBoundary}) and the parabola $\Gamma_m$ with
$m$ given by equation~\eqref{eq:MDefAdvDiff}. It can be seen from the first
line in equation~\eqref{eq:LambdaDt} that, provided the coefficient of
$\Delta t$ in the real part is much smaller than the corresponding
coefficient of $\Delta t$ in the imaginary part then the CFL-like
condition will be approximately linear around that point---as is
apparent by consideration of the actual curves $\Delta t_\mathrm{max}
= F(h)$ in Figure~\ref{fig:StabBDFvsAB} near $h=1$.  Of course, when
$\Delta t$ is reduced to the value $M_t$ or below, then no increases
in $N$ (reductions in $h$) result in instability---as may be
appreciated by consideration of Figures~\ref{fig:AdvDiffCFL}. We may thus
emphasize: within the rectangular stability window no such CFL-like stability
constraints exist.

\begin{figure}[!htb]
	\centering
	\includegraphics[width=0.7\textwidth]{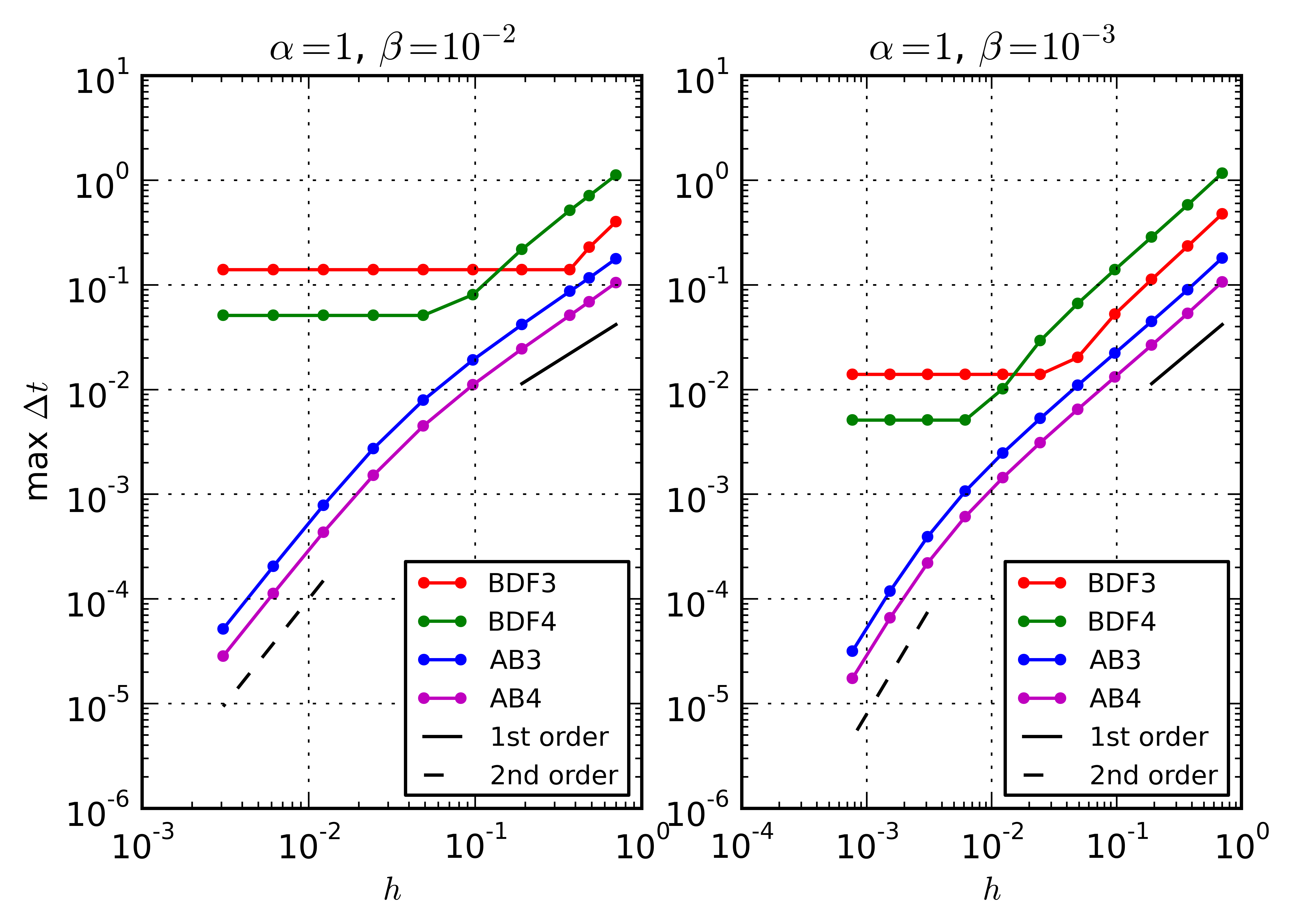}
        \caption{Maximum stable $\Delta t$ versus spatial mesh size
          $h$ for Fourier-based BDF and AB methods of orders three and
          four for the advection-diffusion
          equation~\eqref{eq:AdvectionDiffusion}, with $\alpha=1$,
          $\beta=10^{-2}$ (left plot) and $\alpha=1$, $\beta=10^{-3}$
          (right plot). Both the BDF and AB maximum-stable-$\Delta t$
          values were obtained by considering the intersection of the
          boundary locus of the relevant stability region (either BDF
          or AB), as indicated in Section~\ref{sec:StabilityBDFvsAB}
          in the context of the BDF method.}
	\label{fig:StabBDFvsAB}
\end{figure}

For comparison, Figure~\ref{fig:StabBDFvsAB} also displays the maximum
stable $\Delta t$ curves for the Fourier-based Adams-Bashforth (AB)
multistep methods of orders three and four as functions of the
meshsize $h$ for the advection-diffusion equation under consideration
with $\alpha =1$ and two values of $\beta$. We see that the stability
of both the BDF and AB methods is controlled by an approximately
linear CFL-type constraint of the form $\Delta t < Ch$ for
sufficiently large values of $h$. For smaller values of $h$ the CFL
condition for the explicit method becomes more severe, and eventually
reaches the approximately quadratic regime $\Delta t < Ch^2$. By this
point, the BDF methods have already entered the window of
quasi-unconditional stability. For the particular value of $\alpha$
considered in these examples, at $h=\beta$ the maximum stable $\Delta
t$ values for the BDF methods are approximately one hundred times
larger than their AB counterparts. Clearly, the BDF methods are
preferable in regimes where the AB methods suffer from the severe
$\Delta t < Ch^2$ CFL condition.

\section{Quasi-unconditional stability for the full
Navier-Stokes equations: a numerical study\label{sec:BDFLinearNS}}

Tables~\ref{table:BDFStabilityRe50},
\ref{table:BDFStabilityRe100} and
\ref{table:BDFStabilityRe200} display numerically estimated
maximum stable $\Delta t$ values for the Chebyshev-based BDF-ADI
algorithm introduced in Part~I for the full Navier-Stokes equations in
two dimensional space for various numbers of Chebyshev discretization
points. The specific problem under consideration is posed in the unit
square $[0,1]\times[0,1]$ with Mach number 0.9 and various Reynolds
numbers, with initial condition given by $\mathbf{u}=0$, $\rho=T=1$,
and with a source term of the form
$$
  f(x,y,t) = A\,\sin(2\pi t)\,\exp \left(\,-\frac{1}{2\sigma^2}\left(\,(x-x_0)^2 + (y-y_0)^2\,\right)\,\right)
$$
as the right-hand side of the $x$-coordinate of the momentum equation
($A=6.0$, $\sigma^2=0.05$ and $x_0=y_0=0.5$). No-slip isothermal
boundary conditions ($\mathbf{u}=0$, $T=1$) are assumed at $y=0$ and
$y=1$, and a sponge layer (see Part~I) of thickness $0.1$ and
amplitude $2.0$ is enforced at $x=0$ and $x=1$.  The algorithm was
determined to be stable for a given $\Delta t$ if the solution does
not blow up for 20000 time steps or for the number of time steps
required to exceed $t=100$, whichever is greater.

Tables~\ref{table:BDFStabilityRe50},
\ref{table:BDFStabilityRe100} and
\ref{table:BDFStabilityRe200} suggest that the BDF-ADI
Navier-Stokes algorithm introduced in Part~I is indeed
quasi-unconditionally stable.  In particular, consideration of the
tabulated values indicates that the BDF-ADI methods may be
particularly advantageous whenever the time-steps required for
stability in a competing explicit scheme for a given spatial
discretization is much smaller than the time-step required for
adequate resolution of the time variation of the solution.
\begin{table}[!htb]
\centering
\begin{tabular}{c|ccccc} 
 \hline \hline
 & \multicolumn{5}{c}{$s = \dots$} \\
$N_y$  & 2      & 3    & 4   & 5   & 6  \\ \hline
 12    & 6.1e-1 & 3.5e-1 & 9.1e-2 & 4.1e-2 & 1.7e-2  \\
 16    & 6.1e-1 & 2.9e-1 & 8.7e-2 & 3.2e-2 & 9.0e-3  \\
 24    & 6.1e-1 & 1.3e-1 & 5.9e-2 & 1.9e-2 & 5.3e-3  \\
 32    & 6.1e-1 & 1.2e-1 & 5.0e-2 & 1.5e-2 & 4.3e-3  \\
 48    & 6.1e-1 & 1.0e-1 & 4.2e-2 & 1.3e-2 & 3.7e-3  \\ 
 64    & 6.1e-1 & 1.0e-1 & 4.1e-2 & 1.2e-2 & 3.5e-3  \\ 
 96    & 6.1e-1 & 1.0e-1 & 4.0e-2 & 1.2e-2 & 3.1e-3  \\ 
 128   & 6.1e-1 & 1.0e-1 & 4.0e-2 & 1.2e-2 & 2.8e-3  \\ \hline \hline
\end{tabular}
\caption{Maximum stable $\Delta t$ values for the order-$s$ BDF-ADI Navier-Stokes solvers introduced in Part~I with $s =
  2,\dots,6$, in two spatial dimensions, and  at Reynolds number $\mathrm{Re}=50$ and Mach number $0.8$, with various numbers  $N_y$ of
  discretization points in the $y$ variable. The number of discretization points in the $x$
  direction is fixed at $N_x = 12$.}
\label{table:BDFStabilityRe50}
\end{table}

\begin{table}[!htb]
\centering
\begin{tabular}{c|ccccc} 
 \hline \hline
 & \multicolumn{5}{c}{$s = \dots$} \\
$N_y$  & 2      & 3    & 4   & 5   & 6  \\ \hline
 12    & 6.4e-1 & 3.4e-1 & 5.9e-2 & 3.4e-2 & 1.5e-2  \\
 16    & 6.3e-1 & 2.7e-1 & 5.0e-2 & 2.4e-2 & 9.9e-3  \\
 24    & 6.3e-1 & 1.1e-1 & 4.5e-2 & 1.9e-2 & 6.1e-3  \\
 32    & 6.3e-1 & 9.2e-2 & 3.7e-2 & 1.7e-2 & 5.1e-3  \\
 48    & 6.3e-1 & 7.8e-2 & 3.2e-2 & 1.6e-2 & 4.6e-3  \\ 
 64    & 6.3e-1 & 7.4e-2 & 3.1e-2 & 1.5e-2 & 4.4e-3  \\ 
 96    & 6.3e-1 & 7.2e-2 & 3.0e-2 & 1.5e-2 & 4.3e-3  \\ 
 128   & 6.3e-1 & 7.1e-2 & 3.0e-2 & 1.5e-2 & 4.1e-3  \\ \hline \hline
\end{tabular}
\caption{Same as Table~\ref{table:BDFStabilityRe100} but with
Reynolds number $\mathrm{Re}=100$.}
\label{table:BDFStabilityRe100}
\end{table}

\begin{table}[!htb]
\centering
\begin{tabular}{c|ccccc} 
 \hline \hline
 & \multicolumn{5}{c}{$s = \dots$} \\
$N_y$  & 2      & 3    & 4   & 5   & 6  \\ \hline
 12    & 5.5e-1 & 2.9e-1 & 4.5e-2 & 2.8e-2 & 1.3e-2  \\
 16    & 5.3e-1 & 2.9e-1 & 4.4e-2 & 2.0e-2 & 8.8e-3  \\
 24    & 5.5e-1 & 1.1e-1 & 2.5e-2 & 1.3e-2 & 4.6e-3  \\
 32    & 5.4e-1 & 8.6e-2 & 2.3e-2 & 1.1e-2 & 3.6e-3  \\
 48    & 5.3e-1 & 6.6e-2 & 2.1e-2 & 9.5e-3 & 2.9e-3  \\ 
 64    & 5.3e-1 & 6.1e-2 & 2.1e-2 & 8.2e-3 & 2.9e-3  \\ 
 96    & 5.3e-1 & 5.9e-2 & 2.1e-2 & 8.3e-3 & 2.4e-3  \\ 
 128   & 5.3e-1 & 5.8e-2 & 2.1e-2 & 8.2e-3 & 2.5e-3  \\ \hline \hline
\end{tabular}
\caption{Same as Table~\ref{table:BDFStabilityRe50} but with
Reynolds number $\mathrm{Re}=200$.}
\label{table:BDFStabilityRe200}
\end{table}

\section{Summary and conclusions\label{sec:Conclusions}}

A variety of studies were put forth in this paper concerning the
stability properties of the compressible Navier-Stokes BDF-ADI
algorithms introduced in Part~I, including rigorous stability proofs
for associated BDF- and BDF-ADI-based algorithms for related linear
equations, and numerical stability studies for the fully nonlinear
problem. In particular, the present paper presents proofs of
unconditional stability or quasi-unconditional stability for BDF-ADI
schemes as well as certain associated un-split BDF schemes, for a
variety of diffusion and advection-diffusion linear equations in one,
two and three dimensions, and for schemes of orders $2\leq s\leq 6$ of
temporal accuracy. (The very concept of quasi-unconditional stability
was introduced in Part~I to describe the observed stability character
of the Navier-Stokes BDF-ADI algorithms introduced in that paper.) A
set of numerical experiments presented in this paper for the
compressible Navier-Stokes equation suggests that the algorithms
introduced in Part~I do enjoy the claimed property of
quasi-unconditional stability.

\paragraph{Acknowledgments} The authors gratefully acknowledge support
from the Air Force Office of Scientific Research and the National Science
Foundation. MC also thanks the National Physical Science Consortium for
their support of this effort.

\bibliography{ThesisBib}

\end{document}